\DeclarePairedDelimiterX\set[1]\lbrace\rbrace{\,#1\,}
\newcommand*{\Reals}{\mathbb{R}}
\newcommand{\stitle}[1]{\vspace{1mm}\noindent{\textbf{#1}}.}
\newcommand{\eps}{\varepsilon}
\newcommand{\enet}{{$\varepsilon$-{net}}\xspace}
\newtheorem{theorem}{Theorem} %[section]
\newtheorem{lemma}[theorem]{Lemma}
\newtheorem{proposition}[theorem]{Proposition}
\newtheorem{definition}{Definition}
\newtheorem{corollary}{Corollary}
\newenvironment{proof}{\it \paragraph{Proof}}{\hfill$\square$}
\newtcolorbox{highlightbox}{
  colback=blue!10,
  colframe=blue!20,
  arc=0.5mm, % Rounder edges
  fonttitle=\bfseries,
  % sharp corners=all,
  boxrule=0mm,
  boxsep=0mm,
  left=0mm,
  right=0mm,
  top=0mm,
  bottom=0mm
}
\newtcolorbox{examplebox}{
  colback=blue!10,
  colframe=blue!20,
  arc=2mm, % Rounder edges
  fonttitle=\bfseries,
  % sharp corners=all,
  boxrule=0mm,
  boxsep=1mm,
  left=0mm,
  right=0mm,
  top=0mm,
  bottom=0mm
}
\newtcolorbox{defbox}{
  colback=orange!10,
  colframe=orange!20,
  arc=2mm, % Rounder edges
  fonttitle=\bfseries,
  % sharp corners=all,
  boxrule=0mm,
  boxsep=1mm,
  left=0mm,
  right=0mm,
  top=0mm,
  bottom=0mm
}
\newtcolorbox{pbox}{
  colback=black!5,
  colframe=black!30,
  arc=2mm, % Rounder edges
  fonttitle=\bfseries,
  % sharp corners=all,
  boxrule=0mm,
  boxsep=1mm,
  left=0mm,
  right=0mm,
  top=0mm,
  bottom=0mm
}
\title{HENN: A Hierarchical Epsilon Net Navigation Graph for Approximate Nearest Neighbor Search}
\author{%
  Mohsen Dehghankar\\
  Department of Computer Science\\
  University of Illinois Chicago\\
  Chicago, IL \\
  \texttt{mdehgh2@uic.edu} \\
  \And
  Abolfazl Asudeh \\
  Department of Computer Science \\
  University of Illinois Chicago \\
  Chicago, IL \\
  \texttt{asudeh@uic.edu} \\
}
\begin{document}

\maketitle

\begin{abstract}
Hierarchical graph-based algorithms such as HNSW have achieved state-of-the-art performance for Approximate Nearest Neighbor (ANN) search in practice, yet they often lack theoretical guarantees on query time or recall due to their heavy use of randomized heuristic constructions. Conversely, existing theoretically grounded structures are typically difficult to implement and struggle to scale in real-world scenarios.

We propose the Hierarchical $\varepsilon$-Net Navigation Graph (HENN), a novel graph-based indexing structure for ANN search that combines strong theoretical guarantees with practical efficiency. Built upon the theory of $\varepsilon$-nets, HENN guarantees polylogarithmic worst-case query time while preserving high recall and incurring minimal implementation overhead.

Moreover, we establish a probabilistic polylogarithmic query time bound for HNSW, providing theoretical insight into its empirical success. In contrast to these prior hierarchical methods that may degrade to linear query time under adversarial data, HENN maintains provable performance independent of the input data distribution.

Empirical evaluations demonstrate that HENN achieves faster query time while maintaining competitive recall on diverse data distributions, including adversarial inputs. These results underscore HENN’s effectiveness as a robust and scalable solution for fast and accurate nearest neighbor search.
\end{abstract}

% content
\newcommand{\navgraph}{\mathcal{G}}
\newcommand{\points}{X}
\section{Introduction}
The Approximate Nearest Neighbor (ANN) problem involves retrieving the $k$ closest points to a given query point $q$ in a $d$-dimensional metric space. This problem is foundational in database systems, machine learning, information retrieval, and computer vision, and has seen growing importance in large language models (LLMs)~\cite{jing2024large}, particularly in retrieval-augmented generation (RAG) pipelines, where relevant documents must be retrieved efficiently from large corpora~\cite{fan2024survey, lewis2020retrieval}. 
More generally, any vector database system implements some form of approximate nearest neighbor (ANN) search to enable efficient vector similarity queries~\cite{pan2024survey, han2023comprehensive, aumuller2020ann}.
For a comprehensive overview of additional applications, we refer the reader to the following surveys~\cite{wang2021comprehensive, li2019approximate, liu2021revisiting}.

To address this problem, several classes of algorithms have been developed. Hash-based approaches (e.g., Locality-Sensitive Hashing~\cite{har2012approximate, indyk1998approximate, datar2004locality}) offer theoretical guarantees on retrieval quality but often struggle in practical settings~\cite{pham2022falconn++}. Quantization-based methods cluster data and search among representative centroids, yielding speedups at the cost of approximation error~\cite{jegou2010product, ge2013optimized, ozan2016competitive}. Graph-based approaches~\cite{malkov2018efficient, jayaram2019diskann, wang2021comprehensive}, particularly hierarchical variants~\cite{malkov2018efficient, lu2021hvs, munoz2019hierarchical}, have gained attention due to their strong empirical performance and scalability. These methods build graphs over the dataset and perform greedy traversal to locate approximate neighbors quickly\footnote{Related work is further discussed in Appendix~\ref{sec:related} (supplemental material).}.

Among them, the Hierarchical Navigable Small World (HNSW)~\cite{malkov2018efficient} graph is widely used in practice. HNSW organizes data into multiple layers by assigning each point to a randomly chosen level and constructs navigable small-world graphs at each layer. While HNSW mostly delivers state-of-the-art practical performance and is easy to implement, it lacks formal guarantees on query time, and its worst-case complexity is shown to be \emph{linear to dataset size} in adversarial settings~\cite{indyk2023worst, prokhorenkova2020graph}. This theoretical gap limits its reliability in applications requiring performance bounds.

In contrast, earlier theoretically grounded hierarchical structures, such as Cover Trees~\cite{beygelzimer2006cover} and Navigating Nets~\cite{krauthgamer2004navigating}, provide logarithmic query time guarantees by constructing hierarchies using $r$-nets\footnote{Here, $r$-nets differ from $\eps$-nets as defined in computational geometry. By $r$-net, we refer to a subset of points that ensures every point in the space is within distance $r$ of some net point. In contrast, $\eps$-nets refer to subsets that intersect all "heavy" ranges, containing more than an $\eps$ fraction of the total volume or weight.}. However, these structures are relatively \emph{difficult to implement} and often do not scale well to real datasets, limiting their practical adoption.

This contrast highlights a gap in the literature: {\em no existing method offers the simplicity and scalability of HNSW while also providing provable query time guarantees} of classical hierarchical structures. Theoretical structures with strong guarantees remain underutilized due to their implementation complexity, whereas practical methods often lack worst-case guarantees.

This paper addresses this gap by proposing {\bf Hierarchical $\eps$-Net Navigation Graph (HENN)}, a novel graph-based structure for ANN search (Figure~\ref{fig:henn}).
% In this paper, as our first contribution, we propose the {\bf Hierarchical $\eps$-Net Navigation Graph (HENN)}, a novel graph-based structure for ANN search that bridges this gap (see Figure~\ref{fig:henn}). 
Like HNSW, HENN constructs a hierarchical layering structure over the dataset, but its hierarchy is formed using $\eps$-nets, a fundamental concept from computational geometry that ensures coverage of all significant neighborhoods. Each layer is built by computing an $\eps$-net of the previous one, resulting in a well-balanced hierarchy that guarantees {\bf polylogarithmic query time in the worst case}, depending on the specific navigation graph used in each layer. These nets can be efficiently constructed via random sampling with minimal overhead, making HENN nearly as simple to implement as HNSW.

\begin{figure}[t]
    \centering
    \includegraphics[width=0.8\linewidth]{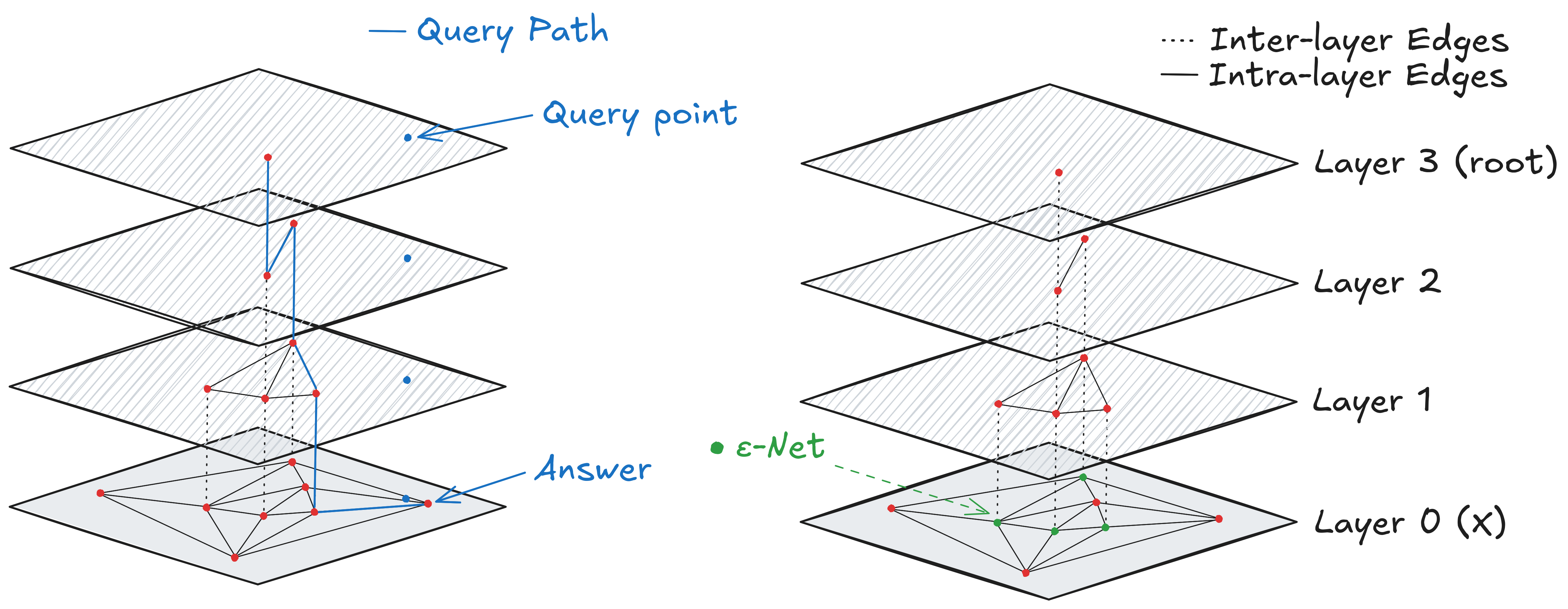}
    \caption{A simple representation of the Hierarchical $\eps$-net Navigation Graph (right) with an example of answering a query using this structure (left). Layers are numbered bottom-up, with layer $0$ being the point set $\points$ and the last layer (layer $L$) called the root.}
    \label{fig:henn}
    \vspace{-7mm}
\end{figure}

Importantly, HENN is \emph{modular}: any graph-based similarity search algorithm can be used within each layer, while the layer construction follows $\eps$-net theory. This flexibility allows HENN to act as a wrapper that enhances existing graph-based solutions with theoretical guarantees depending on the chosen graph.

As the second contribution, we provide a probabilistic analysis of HNSW, showing that its hierarchy implicitly forms an $\eps$-net with high probability, leading to a polylogarithmic query time bound under certain assumptions. This result offers theoretical insight into HNSW’s strong empirical performance.

Our experimental results show that HENN achieves comparable recall and query speed to HNSW on real-world datasets, while significantly outperforming it on adversarial distributions where HNSW’s performance degrades to linear time. Importantly, HENN achieves this without increasing the index size compared to HNSW. However, the improved query speed comes at the cost of increased indexing time due to the repeated sampling required to construct $\eps$-nets. This highlights a trade-off between indexing effort and query-time guarantees, allowing users to choose based on application constraints.

\subsection{Paper Organization}
\vspace{-1mm}
The paper is organized as follows. Section~\ref{sec:henn} introduces the formal notation, defines the problem, and describes the HENN construction during the preprocessing phase, along with an overview of the query algorithm. Section~\ref{sec:theory} presents a formal analysis and proof of the query time guarantees. Section~\ref{sec:exp} reports our experimental results. Section~\ref{sec:discussion} discusses the advantages and limitations of HENN. The Appendix includes extended related work, proofs, pseudo-codes, additional details on constructing $\varepsilon$-nets and navigation graphs, more experimental results, and a discussion of HENN in the dynamic and parallel settings.

\section{Hierarchical $\eps$-net Navigation Graph (HENN)}\label{sec:henn}
\vspace{-2mm}
In this section, after formally introducing the terms and notations, we propose our data structure for answering the approximate nearest neighbor queries.

\stitle{Data Model} We consider a set of $n$ points $\points=\{x_i\}^n$, where $x_i$ is a $d$-dimensional point in $\Reals^d$.
Additionally, we are given a distance function $dist:\Reals^d\times \Reals^d\rightarrow \Reals$ resulting in the metric space $(\points, dist)$\footnote{A distance function $\mathbf{d}$ is defines a metric space if (i) $\mathbf{d}(x,x)=0$, (ii) $\mathbf{d}(x,y)=\mathbf{d}(y,x)$, and (iii) $\mathbf{d}(x,y)+\mathbf{d}(y,z)\geq \mathbf{d}(x,z)$ (triangular inequality)~\cite{harpeled2011geometric}.}, where $dist$ measures the distance (similarity) between every pair of points. In this work, we consider any distance function resulting in a VC-dimensionality of $\Theta(d)$
\footnote{With a slight abuse of the term, we consider the VC-dim of a distance function $dist$ as the VC-dim of range space $(\points, \mathcal{R})$, where $\mathcal{R} = \{dist(x,p)\leq r$, $\forall p\in \Reals^d$ and $r>0\}$.}.
This is the case for most of the distance functions defined in the literature. Specifically, the $\ell_p$-norm between $x_i$ and $x_j$, which is measured as\footnote{Angular distances like cosine similarity also have this property. A discussion on this is also provided in the detailed experiments in Appendix~\ref{sec:app:exp}.}: 
\vspace{-2mm}
\[dist(x_i, x_j) = ||x_i-x_j||_p=\left(\sum_{k=1}^d \left(x_i[k]-x_j[k]\right)^{p}\right)^{\frac{1}{p}}\]

Without loss of generality, we use the Euclidean distance ($\ell_2$-norm) in our examples.

\stitle{Problem setting} Our objective is to preprocess $\points$ and construct a data structure that enables answering the nearest-neighbor queries.
Given a query point $q\in \Reals^d$ and a point set $\points$, a $k$-nearest neighbor ($k$-NN) query aims to find the $k$ closest points $x_i\in \points$ to $q$; i.e., $O_q=k\text{-}\min_{x_i\in X} dist(x_i,q)$.
The Approximate Nearest Neighbor (ANN) is a relaxed version of this problem where the goal is to find a set $A_q\subset \points$ of size $k$ with a high {\em recall rate} -- the probability that each returned point in $A_q$ belongs to $O_q$, formally:
\[
    Recall@k = \frac{|A_q \cap O_q|}{k} \leq 1
\]

We are now ready to define our data structure for ANN query answering:

\begin{definition}[Hierarchical $\eps$-net Navigation Graph (HENN)]
    We define HENN as a multi-layer graph for answering ANN queries on a point set $\points$:
    \begin{itemize}
        \vspace{-2mm}
        \item Each node of the graph represents a point in $\points$.
        \item Each layer $\mathcal{L}_i$, $1 \leq i \leq L$, is an $\eps$-net of the point-set $\points$ and its previous layer, for the range space and the values of $\eps$ specified in Section~\ref{sec:construction}. We use $\mathcal{L}_0$ to show the point set $\points$.
        \item The nodes within each layer are connected with {\em Intra-layer edges} that construct a {\em navigable graph} to answer the ANN {\em only} inside this layer.
        Any of the existing or future navigable-graph techniques, such as the Delaunay Triangulation (DT)~\cite{le1993correlations}, an approximation of the DT graph (like Navigable Small World, NSW~\cite{malkov2014approximate}), or the k-nearest neighbor (kNN) graph, can be applied\footnote{In Appendix~\ref{sec:app:nav-graph} in the supplemental material, we propose heuristics based on ideas such as dimensionality reduction to improve the practical performance of the navigable graph used in HENN.}.
        \item Each pair of layers $\mathcal{L}_i$ and $\mathcal{L}_{i+1}$ ,  $0 \leq i < L$, are connected with {\it Inter-layer edges}. there is an edge between the nodes $v\in \mathcal{L}_i$ and $u\in \mathcal{L}_{i+1}$, if and only if these $v$ and $u$ represent the same point in the point set $\points$.
    \end{itemize}
\end{definition}

Figure~\ref{fig:henn} (right) shows a simple representation of a HENN structure.

Having defined our data structure, we now present its construction details.
\subsection{HENN Graph Construction}\label{sec:construction}
\vspace{-1mm}
In this section, we describe the construction of the HENN graph during the preprocessing phase.
We begin by outlining the high-level structure and then progressively examine each component in greater detail through separate subsections, following a top-down approach.

\stitle{\enet Hierarchy} Each layer $\mathcal{L}_i$ in the HENN graph is an \enet for the range space $(\points, \mathcal{R})$, where $\points$ is the input point set and $\mathcal{R}$ is the family of {\em ring} ranges, defined as follows, that specify subsets of $\points$. For a formal definition of \enet, please refer to the background review provided in Appendix~\ref{sec:app:build-epsnet}.

\begin{definition}[Ring Ranges]
    Given a set of points $\points$, and a distance function $dist$, a {\bf ring} $R\in \mathcal{R}$ is specified with a base point $p\in \Reals^d$ and two values $r_1<r_2$. Any point in $\points$ with distance within two values $r_1$ and $r_2$ from $p$ falls inside the ring. Formally,
    \[
    R\cap \points = \{x\in X~|~ r_1\leq dist(x,p)\leq r_2\}
    \]
\end{definition}

\begin{proposition}\label{prop:1}
    The VC-dim of the ring range space, $(\points, \mathcal{R})$, is $\Theta(d)$.
\end{proposition}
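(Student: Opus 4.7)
The plan is a standard linearization (lifting) argument for the upper bound together with a one-line reduction to balls for the lower bound.

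For the upper bound, I would lift each point $x \in \Reals^d$ to $\phi(x) = (x, \|x\|_2^2) \in \Reals^{d+1}$. Expanding the ring condition $r_1 \leq \|x - p\|_2 \leq r_2$ as $r_1^2 \leq \|x\|_2^2 - 2 \langle x, p \rangle + \|p\|_2^2 \leq r_2^2$, each of the two inequalities becomes affine in $\phi(x)$; concretely, a ring corresponds to the region between two parallel hyperplanes---a \emph{slab}---in $\Reals^{d+1}$, with common normal direction $(-2p, 1)$. The family of such slabs is parameterized by only $d+2$ real numbers ($p \in \Reals^d$ together with $r_1, r_2 \in \Reals$) and is cut out by a constant number of polynomial inequalities of bounded degree in those parameters. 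By the classical sign-counting bound of Warren (generalized by Goldberg--Jerrum), this yields VC-dimension $O(d)$ for the slab class. Since $\phi$ is injective and a ring in $\Reals^d$ is exactly the $\phi$-preimage of the corresponding slab, any point set shattered by rings is also shattered by slabs in the lifted space, and therefore the VC-dimension of $(\points, \mathcal{R})$ is $O(d)$.

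For the lower bound, I would observe that balls are the degenerate case $r_1 = 0$: $\{x : \|x - p\|_2 \leq r\}$ is precisely the ring with parameters $(p, 0, r)$. It is classical that closed Euclidean balls in $\Reals^d$ have VC-dimension exactly $d+1$, so the VC-dimension of the ring range space is at least $d+1 = \Omega(d)$. Combined with the upper bound, this yields the claimed $\Theta(d)$.

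The main obstacle will be packaging the slab VC-dimension bound cleanly. A naive application of the Blumer--Ehrenfeucht--Haussler--Warmuth composition lemma---treating a slab as the intersection of two general halfspaces---loses a logarithmic factor and only gives $O(d \log d)$. To get a genuinely linear bound, I would either invoke the polynomial sign-counting machinery directly, or exploit the extra structure that the two halfspaces defining a slab share a normal direction, effectively reducing the problem to a single one-parameter family of halfspaces together with two thresholds, which is elementary to bound by $O(d)$.
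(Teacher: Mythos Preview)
Your argument is correct for the Euclidean case but takes a different route from the paper. The paper simply writes a ring $R(p,r_1,r_2)$ as the set difference $R' \setminus R''$ of two distance-balls $R' = \{x : dist(x,p) \le r_2\}$ and $R'' = \{x : dist(x,p) \le r_1\}$, and then invokes the standard mixing (Boolean-closure) lemma for range spaces from~\cite{harpeled2011geometric}: a fixed Boolean combination of two ranges drawn from a class of VC-dimension $\delta$ yields a class of VC-dimension $O(\delta)$---the paper in fact quotes the constant as $2\delta$. This is exactly the ``naive composition'' you worry will cost a $\log d$ factor, but for a \emph{constant} number of pieces the Sauer--Shelah counting argument gives a linear bound with no logarithmic loss, so that concern is unfounded and the one-line proof goes through. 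The chief practical difference is generality: the paper's mixing argument works verbatim for \emph{any} distance function $dist$ whose ball ranges have VC-dimension $\Theta(d)$, which is precisely the hypothesis under which the proposition is stated, whereas your lifting $\phi(x)=(x,\|x\|_2^2)$ is specific to $\ell_2$ and would need separate work (or, ultimately, the same mixing lemma) to cover the general $dist$ assumed in the data model. Your lower bound via balls as degenerate rings matches what the paper relies on implicitly.
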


The correctness of Proposition~\ref{prop:1} follows the fact that each ring range $R: \langle p,r_1,r_2\rangle$ can be formulated by mixing the two distance ranges  $R': dist(x,p)\leq r_2$ and $R'': dist(x,p)\leq r_1$ as $R=R'-R''$. Hence, due to the mixing property of range spaces~\cite{harpeled2011geometric}, the VC-dim of the ring ranges is two times the VC-dim of the distance ranges, i.e., $\Theta(d)$. Remember, we assumed that the VC-dim of $dist$ is $\Theta(d)$.

Following Proposition~\ref{prop:1}, one can build an $\eps$-net of size $O(\frac{d}{\eps}\log \frac{d}{\eps})$, for the ring range space, for a given value $\eps$~\cite{har2012approximate} (See $\eps$-net construction detail background in Appendix~\ref{sec:app:build-epsnet}).

\stitle{The construction algorithm} The preprocessing phase follows a recursive construction of the HENN graph using     Algorithm~\ref{alg:preprocess}. It begins with the initial set of points being the entire point set, i.e., $\mathcal{L}_0=\points$, and constructs an $\eps(n)$-net over $\points$, as explained in Appendix~\ref{sec:app:build-epsnet}, where $n = |\points|$ and $\eps(n)$ is defined later in Equation~\ref{eq:eps}. This forms the first layer, denoted $\mathcal{L}_1$. After finding the points in this layer, we follow a black-box approach for constructing a {\it navigable graph} within this layer and add the intra-layer edges accordingly (see Appendix~\ref{sec:app:nav-graph} for more details). Even though this construction follows a sequential order in building layers, a discussion on how to parallelize this step is provided in Appendix~\ref{sec:app:par}.

\begin{algorithm}[h]
\caption{HENN Graph Construction (Preprocess) Algorithm}
\label{alg:preprocess}
\begin{algorithmic}[1]
\Require The set of points $\points$, maximum number of layer $L$, and the exponential decay $m$.
\Ensure The HENN graph $\mathcal{H}$.
\Function{BuildHENN}{$\points, L, m$}
    \State $\mathcal{L}_0 \gets \points$
    \For{$i \leq L$}
        \State $s \gets |\mathcal{L}_{i-1}|$
        \State $\eps \gets $ calculate $\eps(s)$ based on Equation~\ref{eq:eps}
        \State $\mathcal{L}_i \gets \textit{BuildEpsNet}(\mathcal{L}_{i-1}, \eps)$\Comment{build an $\eps$-net, see Appendix~\ref{sec:app:build-epsnet}.}
        \State Connect each node in $\mathcal{L}_i$ to the previous layer (Inter-layer edges).
        \State Build a navigable graph on $\mathcal{L}_i$ (Intra-layer edges).\Comment{See Appendix~\ref{sec:app:nav-graph}}.
    \EndFor
    \State {\bf Return} $\mathcal{H} = \{\mathcal{L}_0, \mathcal{L}_1,\cdots,\mathcal{L}_L\}$ and the edges.
\EndFunction
\end{algorithmic}
\end{algorithm}

Subsequently, the algorithm recursively builds each layer $\mathcal{L}_{i+1}$ as an $\eps(|\mathcal{L}_i|)$-net of the previous layer $\mathcal{L}_i$ and adds the inter-layer edges. This process continues until a total of $L$ layers are constructed, where $L$ is a hyperparameter specifying the depth of the HENN graph. 
The parameter $\eps$ is defined as
\vspace{-2mm}
\begin{align}\label{eq:eps} 
    \eps(s) = c_0 d\frac{\log s}{s}2^{m}, 
\end{align}

where $c_0$ is a sufficiently large constant (to be specified later), $d$ is the dimension of $\points$, and $m$ is a hyperparameter controlling the \underline{exponential decay} in the size of the layers across the hierarchy. As the following lemma shows, increasing $m$ accelerates the exponential decay in the layer sizes.

\begin{lemma}\label{lem:layer-size}
    In the construction described above, by using the equation ~\ref{eq:eps}, the size of each layer $\mathcal{L}_i$ would decrease exponentially. Formally, for every $1 \leq i \leq L$ ($\mathcal{L}_0 = \points$),
    \vspace{-2mm}
    \[
        |\mathcal{L}_i| \leq \frac{|\mathcal{L}_{i-1}|}{2^m}
    \]
\end{lemma}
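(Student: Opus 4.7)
The plan is a direct substitution argument: plug the choice of $\varepsilon(s)$ into the known $\varepsilon$-net size bound and show that, once the constant $c_0$ is chosen sufficiently large, the resulting bound is at most $s/2^m$. Since Proposition~\ref{prop:1} gives VC-dim $\Theta(d)$ for the ring range space, the classical $\varepsilon$-net theorem (invoked in the paragraph following Proposition~\ref{prop:1}) tells us that with high probability an $\varepsilon$-net of $(\mathcal{L}_{i-1}, \mathcal{R})$ of size
\[
N \;=\; c_1 \cdot \frac{d}{\varepsilon}\log\frac{d}{\varepsilon}
\]
exists, for some absolute constant $c_1$. Taking $s = |\mathcal{L}_{i-1}|$ and setting $\varepsilon = \varepsilon(s) = c_0 d\,\frac{\log s}{s}\,2^{m}$, this is the quantity I want to compare against $s/2^{m}$.

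Next I would carry out the substitution step-by-step. First,
\[
\frac{d}{\varepsilon(s)} \;=\; \frac{s}{c_0\,2^{m}\log s}.
\]
Plugging in, the net size becomes
\[
N \;\le\; c_1 \cdot \frac{s}{c_0\,2^{m}\log s}\cdot \log\!\left(\frac{s}{c_0\,2^{m}\log s}\right)
\;=\; \frac{c_1}{c_0}\cdot\frac{s}{2^{m}}\cdot\frac{\log s - \log(c_0\,2^{m}\log s)}{\log s}.
\]
The second fraction is at most $1$ for every $s \ge 2$ (the numerator is $\log s$ minus a nonnegative term), so
\[
N \;\le\; \frac{c_1}{c_0}\cdot\frac{s}{2^{m}}.
\]
Choosing $c_0 \ge c_1$ (this is the place where the ``sufficiently large constant'' assumption on $c_0$ is used) yields $N \le s/2^{m}$, which is exactly the desired bound on $|\mathcal{L}_i|$.

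The main obstacle is really bookkeeping rather than a deep issue: one must be careful that $\log(d/\varepsilon)$ does not blow up and undo the $1/\log s$ savings. Because $c_0\,2^{m}\log s \ge 1$ in the regime of interest (layers are nontrivial, $m\ge 0$, $c_0\ge 1$), the extra logarithmic factor in the $\varepsilon$-net size is at most $\log s$, and the saving is preserved. A secondary caveat is that the statement is meaningful only while $s$ is large enough that $\varepsilon(s) < 1$; once $|\mathcal{L}_{i-1}|$ shrinks to $O(d\,2^{m}\log s)$, the construction terminates naturally (which is consistent with the outer loop running for only $L$ iterations). I would note this edge case at the end of the proof, but it does not affect the inductive decay claim.
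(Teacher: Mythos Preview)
Your proof is correct and follows essentially the same route as the paper's own argument: substitute $\varepsilon(s)$ into the $\varepsilon$-net size bound $c_1\frac{d}{\varepsilon}\log\frac{d}{\varepsilon}$, observe that the inner logarithm is at most $\log s$, and absorb the remaining $c_1/c_0$ factor by the ``sufficiently large $c_0$'' assumption. Your write-up is in fact slightly more careful than the paper's (you make the $\log(d/\varepsilon)\le\log s$ step explicit and flag the $\varepsilon(s)<1$ edge case), but the underlying computation is identical.
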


\begin{proof}
    Proof is provided in Appendix~\ref{sec:app:proofs}.
\end{proof}

{\bf Note:} The value of $m$ (layers exponential decay), in HNSW, is generally selected between $1$ and $6$, based on the tradeoff between the recall and indexing time~\cite{malkov2018efficient}.

\begin{corollary}\label{cor:layer-size}
    The property of being an \enet is preserved under the addition of extra points. Therefore, without loss of generality, we may assume that for all $i$, the sizes of the layers satisfy:
    \vspace{-1mm}
    \[
        |\mathcal{L}_i| = \frac{|\mathcal{L}_{i-1}|}{2^m}
    \]
\end{corollary}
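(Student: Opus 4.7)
The plan is to reduce the corollary to a simple monotonicity property of $\eps$-nets, and then invoke Lemma~\ref{lem:layer-size} to pad each layer up to the claimed uniform size. Recall that an $\eps$-net of a range space $(\InputSet,\mathcal{R})$ is any subset $N\subseteq \InputSet$ such that every heavy range $R\in\mathcal{R}$ with $|R\cap \InputSet|\geq \eps|\InputSet|$ satisfies $R\cap N\neq\emptyset$. The key observation is that this covering requirement is \emph{monotone in} $N$: if $N\subseteq N'\subseteq \InputSet$ and $N$ is an $\eps$-net, then so is $N'$, because any heavy range already hit by $N$ is a fortiori hit by the superset $N'$.

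Given this, I would argue as follows. By Lemma~\ref{lem:layer-size}, the layer $\mathcal{L}_i$ returned by \textit{BuildEpsNet} satisfies $|\mathcal{L}_i|\leq |\mathcal{L}_{i-1}|/2^m$. If this inequality is strict, choose an arbitrary subset $S\subseteq \mathcal{L}_{i-1}\setminus \mathcal{L}_i$ of size $|\mathcal{L}_{i-1}|/2^m - |\mathcal{L}_i|$ (rounding as needed), and replace $\mathcal{L}_i$ by $\mathcal{L}_i\cup S$. By the monotonicity observation above, the enlarged set remains an $\eps(|\mathcal{L}_{i-1}|)$-net for the ring range space over $\mathcal{L}_{i-1}$, and it has exactly the desired cardinality $|\mathcal{L}_{i-1}|/2^m$. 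Applying this padding inductively to every layer yields the stated equality for all $i$.

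There is essentially no technical obstacle: the only thing worth being careful about is that the padding does not break any downstream guarantee. Here the only property used later is membership in an $\eps$-net of the previous layer, and this is preserved by the monotonicity argument. The rounding to an integer cardinality introduces at most an additive slack of one, which can be absorbed into the constant $c_0$ in Equation~\ref{eq:eps} and is therefore harmless for the asymptotic analysis carried out in Section~\ref{sec:theory}.
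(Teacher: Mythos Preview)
Your proposal is correct and follows exactly the reasoning the paper intends: the paper does not give a separate proof, but embeds the justification in the corollary statement itself (``the property of being an \enet is preserved under the addition of extra points''), which is precisely your monotonicity-of-$\eps$-nets observation followed by padding. Your write-up simply makes explicit the details (choosing $S\subseteq\mathcal{L}_{i-1}\setminus\mathcal{L}_i$, handling rounding) that the paper leaves implicit.
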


\begin{corollary}\label{cor:layer-count} The largest possible number of layers, $L$, in the HENN graph grows logarithmically with the size of the point set $\points$, where $|\points| = n$. Specifically, we have
\vspace{-1mm}
\[
    L = O\left(\frac{\log n}{m}\right).
\]

\end{corollary}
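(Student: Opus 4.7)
The plan is to derive the bound on $L$ directly from the exponential decay established in Lemma~\ref{lem:layer-size} and made exact in Corollary~\ref{cor:layer-size}. The key observation is that since each successive layer shrinks by a factor of $2^m$, after iterating this contraction enough times the layer size will drop below any meaningful threshold (e.g.\ a single point), which caps how many layers the hierarchy can sustain.

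Concretely, I would first unroll the recurrence $|\mathcal{L}_i| = |\mathcal{L}_{i-1}|/2^m$ from Corollary~\ref{cor:layer-size} to obtain the closed form $|\mathcal{L}_i| = n / 2^{im}$, using $|\mathcal{L}_0| = |\points| = n$ as the base case. Since each layer must contain at least one point (otherwise the hierarchy terminates and no further layer can be constructed), the deepest admissible index $L$ must satisfy $|\mathcal{L}_L| = n/2^{Lm} \geq 1$. Taking logarithms and solving for $L$ yields $Lm \leq \log n$, i.e.
\[
L \leq \frac{\log n}{m},
\]
which is exactly the claimed $O(\log n / m)$ bound.

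There is no real obstacle here; the argument is essentially a geometric series / telescoping calculation once the per-layer shrinkage ratio is pinned down. The only subtlety worth mentioning is that Lemma~\ref{lem:layer-size} gives an inequality $|\mathcal{L}_i| \leq |\mathcal{L}_{i-1}|/2^m$, so in the worst case the layers could shrink even faster than $2^m$, which would only make $L$ smaller and therefore preserve the upper bound. Invoking Corollary~\ref{cor:layer-size} lets us treat the ratio as an equality without loss of generality, and the resulting bound on $L$ is tight up to constants.
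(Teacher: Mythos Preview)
Your proposal is correct and is precisely the argument the paper intends: the corollary is stated without proof as an immediate consequence of Corollary~\ref{cor:layer-size}, and unrolling $|\mathcal{L}_i| = n/2^{im}$ together with the termination condition $|\mathcal{L}_L|\geq 1$ is exactly the implied derivation (the paper uses this same closed form later when computing $\sum_i |\mathcal{L}_i|$). Your remark that the inequality in Lemma~\ref{lem:layer-size} only helps the upper bound is also the right observation.
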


After explaining the construction of the HENN graph during the preprocessing time, we are now ready to describe ANN query answering.

\subsection{Query Answering Using HENN} 
\vspace{-1mm}
During the query time, given a query point $q\in \Reals^d$, the goal is to find the approximate nearest neighbor of $q$ within $\points$. We follow the Greedy Search algorithm on top of HENN graph: Starting from a random node in the root (layer $\mathcal{L}_L$), we find the nearest neighbor of $q$ within this layer by following a simple Greedy Search algorithm. %(See Algorithm~\ref{alg:query} in the Appendix). 
After finding the nearest neighbor $v_i$ in layer $\mathcal{L}_i$, we continue this process, starting from $v_i$ in layer $\mathcal{L}_{i - 1}$. We continue this until reaching the bottom (layer $\mathcal{L}_0$)\footnote{Note that this is the same greedy algorithm used in the literature.}. A pseudo-code of this algorithm is provided in Algorithm~\ref{alg:query} in the Appendix, and a visual illustration is shown in Figure~\ref{fig:henn} (left).
Answering $k$-Nearest Neighbors for $k > 1$ can be achieved by considering a set of candidates at each step in the greedy algorithm~\cite{malkov2018efficient}.
\section{Theoretical Analysis}\label{sec:theory}
\vspace{-2mm}
In this section, we provide theoretical guarantees on the running time of the HENN graph for answering ANN.
The query time for ANN using HENN partially depends on the specific navigation graph used (Appendix~\ref{sec:app:nav-graph}), as in HNSW.
Recall that a navigable graph connects the nodes in each layer $\mathcal{L} \subseteq \points$ of the HENN graph using the intra-layer edges, and is used by the Greedy algorithm at the query time.

Let \(\mathcal{G}_\mathcal{L}= G\left(\mathcal{L},E_\mathcal{L}\right)\) be a navigable graph built for a layer $\mathcal{L}\subseteq X$, where 
$E_\mathcal{L}$ is the set of intra-layer edges.
Navigating through the graph $\mathcal{G}_\mathcal{L}$, while following the edges $E_\mathcal{L}$ greedily, the Greedy algorithm finds a point  $\overline{p}$ as the approximate nearest neighbor of $q$ among the nodes in $\mathcal{L}$. Formally, 
\[
    \overline{p} = \mathsf{GS}(\mathcal{G}_\mathcal{L}, q)
\]
Where $\mathsf{GS}$ is short for \emph{GreedySearch} in Algorithm~\ref{alg:query}.
In the following, we define a notation of accuracy for a navigable graph algorithm $\mathcal{G}$, which is used as a parameter in the subsequent guarantees that we provide.

\begin{definition}[Recall Bound: $\rho_\delta$]
    For a given Navigable Graph $\mathcal{G}_\mathcal{L}$ on a subset $\mathcal{L}$, the Recall Bound of $\mathcal{G}_\mathcal{L}$ with a probability $\delta$ is defined as the smallest value $k$, such that for all queries $q$, the result of greedy search on $\mathcal{G}_\mathcal{L}$ will give at least one point within the $k$-nearest Neighbors of $q$ with probability more than $\delta$. In other words,
    \vspace{-2mm}
    \[
        \rho_\delta = \arg\min_{k \leq |\mathcal{L}|} \left(Pr\Big(\mathsf{GS}(\mathcal{G}_\mathcal{L}, q) \in \mathsf{NN}_{k, \mathcal{L}}(q)\Big) \geq \delta \right)
    \]
    \vspace{-1mm}
    Where $\mathsf{NN}_{k, \mathcal{L}}(q)$ is the ground-truth $k$-nearest neighbors of $q$ in $\mathcal{L}$. The probability is taken over all possible initial nodes of the graph $\mathcal{G}_\mathcal{L}$, as a starting node.
\end{definition}

\paragraph{Note:} This defines a different and weaker notion of accuracy for graph-based algorithms compared to the standard recall rate. While $Recall@k$ measures the fraction of the true $k$ nearest neighbors retrieved, here we are interested in identifying the smallest value of $k$ such that, with high probability, at least one of the true $k$ nearest neighbors is returned by the algorithm.

In practice, we are usually concerned with a fixed probability $\delta \geq 0.9$. 
In addition, for most of the existing navigable graphs~\cite{malkov2014approximate, malkov2018efficient}, $\rho_{0.9} = O(1)$. We will compare these values by running additional experiments in the Appendix.
\vspace{-2mm}
\subsection{Running Time}
\vspace{-1mm}
We start by proving a lemma that focuses only on a \emph{single pair of layers} (the bottom-most layer and the one above it) and shows the usefulness of using $\eps$-nets in this hierarchy.

\begin{lemma}\label{lem:two-layer-case}
    Let $\mathcal{L}$ be an $\eps$-net of $\points$, where $n = |\points|$. Let $\overline{p} = \mathsf{GS}(\mathcal{G}_\mathcal{L}, q)$ for a query point $q$. Then, the number of points in $\points$ closer than $\overline{p}$ to $q$ is less than $\eps \cdot (\rho_\delta + 2) \cdot n$, with probability more than $\delta$. Formally, with probability more than $\delta$, we have:
    \vspace{-0.5mm}
    \begin{align}
        |\mathsf{NN}_{\leq dist(\overline{p}, q)}|
         \leq \eps \cdot (\rho_\delta + 2) \cdot n
    \end{align}
    \vspace{-1mm}
    where $\mathsf{NN}_{\leq dist(\overline{p}, q)} = \{p\in X \mid dist(p, q) \leq dist(\overline{p}, q)\}$
\end{lemma}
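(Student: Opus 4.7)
The plan is to combine the probabilistic guarantee from the definition of $\rho_\delta$ with the $\varepsilon$-net property of $\mathcal{L}$ for the ring range space (Proposition~\ref{prop:1}). First I would unpack the $\rho_\delta$ guarantee: with probability at least $\delta$ over the random starting node, $\overline{p}=\mathsf{GS}(\mathcal{G}_\mathcal{L},q)$ lies among the top-$\rho_\delta$ nearest neighbors of $q$ in $\mathcal{L}$. Condition on this event. Let $p^{\mathcal{L}}_1,\dots,p^{\mathcal{L}}_{\rho_\delta}$ be those top-$\rho_\delta$ nearest neighbors, sorted by distance as $r_1\le r_2\le\dots\le r_{\rho_\delta}$. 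Since $\overline{p}$ is one of them, $r:=dist(\overline{p},q)\le r_{\rho_\delta}$, so $\mathsf{NN}_{\le dist(\overline{p},q)}\subseteq B(q,r_{\rho_\delta})\cap\points$, and it suffices to bound $|B(q,r_{\rho_\delta})\cap\points|$.

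The main step is a partition of $B(q,r_{\rho_\delta})$ by the spheres at the distances $r_1,\dots,r_{\rho_\delta}$. Consider the open ball $\{x:dist(x,q)<r_1\}$ and the $\rho_\delta-1$ open rings $\{x:r_i<dist(x,q)<r_{i+1}\}$ for $i=1,\dots,\rho_\delta-1$. Because the $p^{\mathcal{L}}_i$'s are by definition the $\rho_\delta$ nearest $\mathcal{L}$-points to $q$, \emph{none} of these $\rho_\delta$ open regions contains any point of $\mathcal{L}$. Each is a monotone limit of closed ring ranges (e.g.\ $[r_i+\eta,r_{i+1}-\eta]$ as $\eta\to 0$), so by the $\varepsilon$-net property of $\mathcal{L}$ applied to the ring range space from Proposition~\ref{prop:1}, each such region contains at most $\varepsilon n$ points of $\points$.

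Summing over the $\rho_\delta$ open regions contributes at most $\rho_\delta\cdot\varepsilon n$ points, and what remains is the contribution of the $\rho_\delta$ spheres $\{dist(x,q)=r_i\}$. In general position each such sphere meets $\points$ only in $p^{\mathcal{L}}_i$, giving an additive boundary term of at most $\rho_\delta$. Combining, $|\mathsf{NN}_{\le dist(\overline{p},q)}|\le \rho_\delta\,\varepsilon n+\rho_\delta$. In the regime of interest (Equation~\ref{eq:eps}), $\varepsilon n=\Theta(d\log n\cdot 2^m)$ while $\rho_\delta=O(1)$, so $\rho_\delta\le 2\varepsilon n$, and the $\rho_\delta$ boundary term is absorbed into the $2\varepsilon n$ slack, yielding the claimed bound $\varepsilon(\rho_\delta+2)n$.

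The main obstacle is handling the spheres themselves: the $\varepsilon$-net bound is vacuous on them because each sphere $\{dist(x,q)=r_i\}$ necessarily contains the witnessing point $p^{\mathcal{L}}_i$, so one cannot directly invoke Proposition~\ref{prop:1} on them. The plan sidesteps this in two ways: (i) the open ball and the open rings between consecutive $p^{\mathcal{L}}_i$'s are controlled via the closed-ring-limit trick described above, and (ii) the sphere contribution is charged to the ``$+2$'' slack, which is legitimate precisely because the construction in Section~\ref{sec:construction} picks $\varepsilon$ so that $\varepsilon n$ dominates $\rho_\delta$. A small perturbation or a union bound over slightly inflated nested closed rings formalizes the limit step without invoking general position, and is really the only subtlety in the argument.
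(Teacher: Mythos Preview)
Your approach is essentially the same as the paper's: both condition on $\overline{p}\in\mathsf{NN}_{\rho_\delta,\mathcal{L}}(q)$, slice the ball around $q$ into concentric open rings delimited by the nearest $\mathcal{L}$-points (the paper uses $\rho_\delta+1$ rings by treating $\overline{p}$ as a separate outer radius from the at most $\rho_\delta$ strictly closer $\mathcal{L}$-points, versus your $\rho_\delta$), invoke the $\eps$-net property on ring ranges to cap each at $\eps n$, and fold the boundary points into the ``$+2$'' slack. You are in fact more explicit than the paper about the open-versus-closed ring limit and about why the sphere contribution is dominated by $\eps n$; the paper simply writes ``by considering $P_\mathcal{L}$ itself'' for that term.
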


\begin{proof}
    Sketch: This is a result of finding $\eps$-nets on \underline{ring} ranges, where it bounds the total number of points around $q$. The proof is provided in Appendix~\ref{sec:app:proofs}.
\end{proof}

Now, we are ready to propose the main theorem on the time complexity of ANN-query answering using HENN graph:

\begin{theorem}\label{thm:time}
    Given a set of points $\points \subset \mathbb{R}^d$ and the hyperparameter $m$ for the exponential decay, let $\mathcal{H}$ be the HENN graph built using the function $\mathsf{BuildHENN}$ (Algorithm~\ref{alg:preprocess}). Then, the running time of $\mathsf{Query}(\mathcal{H}, q)$ (Algorithm~\ref{alg:query}) is 
    $O(d \cdot d^* \cdot \rho_\delta \cdot \log^2 n)$.
    Where $d^*$ is the average degree of a node in the navigation graphs $\mathcal{G}$ and $\rho_\delta$ is the Recall Bound of the navigable graphs.
\end{theorem}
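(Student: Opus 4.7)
The plan is to decompose the total query cost as a sum over the $L+1$ layers, bounding the per-layer greedy-search cost in terms of the number of points strictly closer to $q$ than the node where the greedy search for that layer begins, and then adding up. First I would invoke Corollary~\ref{cor:layer-count} so that the number of layers is $L = O(\log n / m)$, which is the first $\log n$ factor. The top layer $\mathcal{L}_L$ has size bounded by a constant (after absorbing $m$ into the constant), so the random start there contributes only $O(d\cdot d^*)$ work and can be ignored asymptotically.

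Next I would apply Lemma~\ref{lem:two-layer-case} layer by layer. Consider transitioning from layer $\mathcal{L}_{i+1}$ to $\mathcal{L}_i$: because $\mathcal{L}_{i+1}$ is an $\eps$-net of $\mathcal{L}_i$ with $\eps=\eps(|\mathcal{L}_i|)=c_0 d\,\log|\mathcal{L}_i|\cdot 2^m/|\mathcal{L}_i|$, the output $\overline{p}_{i+1}=\mathsf{GS}(\mathcal{G}_{\mathcal{L}_{i+1}},q)$ satisfies, with probability at least $\delta$,
\[
\bigl|\{x\in\mathcal{L}_i \mid dist(x,q)\le dist(\overline{p}_{i+1},q)\}\bigr|
\;\le\; \eps(|\mathcal{L}_i|)\cdot(\rho_\delta+2)\cdot|\mathcal{L}_i|
\;=\; O\!\bigl(d\,\rho_\delta\,2^m\,\log n\bigr).
\]
Since the greedy search in $\mathcal{L}_i$ begins at $\overline{p}_{i+1}$ (through an inter-layer edge) and moves to strictly closer neighbors at every step, the number of greedy steps in $\mathcal{L}_i$ is bounded by the quantity above. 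Each step examines at most $d^*$ neighbors, and each distance evaluation costs $O(d)$, so the work inside layer $\mathcal{L}_i$ is $O(d\cdot d^*\cdot \rho_\delta\cdot \log n)$ after absorbing $m$ and $c_0$ (and, following the convention of the theorem statement, the extra factor of $d$ coming from $\eps$) into the constants.

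Summing over all $L=O(\log n/m)$ layers gives the claimed $O(d\cdot d^*\cdot \rho_\delta\cdot \log^2 n)$ bound. For the final probability, one takes a union bound over the $L$ independent applications of Lemma~\ref{lem:two-layer-case}; since $L$ is only logarithmic, the constant $c_0$ (hence the failure probability per layer) can be chosen so that the overall success probability remains at least $\delta$ up to constants.

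The main obstacle I expect is the recursive application of Lemma~\ref{lem:two-layer-case}: the lemma itself is stated for a single pair $(\points,\mathcal{L})$, and one needs to verify that its hypothesis is met at every level — namely, that $\mathcal{L}_{i+1}$ is indeed an $\eps(|\mathcal{L}_i|)$-net of $\mathcal{L}_i$ and that the starting point for the greedy search at level $i$ inherits the correct distance guarantee from level $i+1$. A secondary bookkeeping issue is the propagation of the per-layer failure probabilities and the treatment of the top layer, where Lemma~\ref{lem:two-layer-case} does not directly apply because there is no higher $\eps$-net to seed the greedy search; this is resolved by observing that Corollary~\ref{cor:layer-count} makes $|\mathcal{L}_L|$ a constant so that the fully exhaustive cost there is absorbed into the bound.
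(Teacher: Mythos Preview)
Your proposal is correct and matches the paper's argument: both apply Lemma~\ref{lem:two-layer-case} at each layer transition to bound the number of greedy steps in $\mathcal{L}_i$ by $\eps(|\mathcal{L}_i|)\cdot(\rho_\delta+2)\cdot|\mathcal{L}_i|=O(d\,\rho_\delta\,2^m\log n)$, then aggregate over $O(\log n/m)$ layers. The only cosmetic difference is that the paper packages this as a recurrence $T(n)\le T(n/2^m)+O(d\cdot d^*\cdot\rho_\delta\cdot 2^m\log n)$ and invokes the Master Theorem, whereas you unroll the sum explicitly via Corollary~\ref{cor:layer-count}; the content is identical.
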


\begin{proof}
    Sketch: This can be proved by applying Lemma~\ref{lem:two-layer-case} inductively, considering layers $\mathcal{L}_1$ up to $\mathcal{L}_L$.
    The proof is provided in Appendix~\ref{sec:app:proofs}
\end{proof}

In most of the navigation graphs used in practice, the degree $d^*$ is fixed as a constant. As a result, the running time is \emph{polylogarithmic in $n$} in these settings.
The running time also depends on the Recall Bound of the navigation graph $\mathcal{G}$ used in the HENN construction. The better navigation graphs in terms of accuracy also result in faster query answering. As long as $\rho_\delta$ is a constant (or $O(\log n)$), the running time will still be \emph{polylogarithmic in $n$}.

\begin{corollary}[HNSW Guarantees]\label{cor:hnsw}
    Since a random sample of a point set forms an $\eps$-net with a certain probability, Theorem~\ref{thm:time} implies that HNSW, which constructs each layer through random sampling, achieves the same polylogarithmic query time, assuming that each layer forms an $\eps$-net with a specific probability. Note that HENN improves on this by explicitly enforcing the $\eps$-net property deterministically (depending on the Recall Bound).
\end{corollary}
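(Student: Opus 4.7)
The plan is to show that HNSW's random sampling implicitly produces an $\eps$-net at every layer with high probability, so that Theorem~\ref{thm:time} can be applied essentially verbatim. The argument rests on three ingredients: the classical $\eps$-net theorem for range spaces of bounded VC-dimension (already available for the ring range space via Proposition~\ref{prop:1}), a concentration bound on the sizes of HNSW's levels, and a union bound over the $L = O(\log n / m)$ layers of the hierarchy.

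First I would recall that in HNSW every point in layer $\mathcal{L}_{i-1}$ is retained in $\mathcal{L}_i$ independently with probability $1/2^m$; hence $\mathcal{L}_i$ is a uniform random sample of $\mathcal{L}_{i-1}$ with expected size $|\mathcal{L}_{i-1}|/2^m$. A Chernoff bound shows that $|\mathcal{L}_i|$ is within a constant factor of this expectation except on an event of probability $1/\mathrm{poly}(n)$, so with high probability the sample sizes in HNSW match the deterministic layer sizes enforced by HENN (Corollary~\ref{cor:layer-size}).

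Next I would invoke the $\eps$-net theorem: for $\eps = \eps(s) = c_0 d \frac{\log s}{s} 2^m$ with $s = |\mathcal{L}_{i-1}|$, a uniform random sample of $\Theta\!\left(\frac{d}{\eps}\log\frac{d}{\eps\,\delta_i}\right)$ points from a range space of VC-dimension $\Theta(d)$ is an $\eps$-net for the ring ranges with probability at least $1-\delta_i$. A short calculation confirms that the expected HNSW sample size $s/2^m$ exceeds this threshold for a sufficiently large $c_0$, which is precisely the estimate already verified for HENN in Lemma~\ref{lem:layer-size}. Setting $\delta_i = \delta'/L$ and union-bounding over all layers, every layer of HNSW is simultaneously an $\eps(|\mathcal{L}_{i-1}|)$-net with probability at least $1-\delta'$; on this event, the hypotheses of Theorem~\ref{thm:time} hold verbatim and yield the $O(d\cdot d^*\cdot \rho_\delta \cdot \log^2 n)$ running-time bound.

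The main obstacle I anticipate is careful parameter bookkeeping: one must ensure that the per-layer $\eps$-net failure probability, the Chernoff deviation on sample sizes, and the per-layer recall parameter $\rho_\delta$ compose without inflating the $\log n$ factors or driving the overall success probability below a useful level. A secondary subtlety is that HNSW's intra-layer graphs are built by the original NSW heuristic rather than by the black-box navigable graph assumed in Theorem~\ref{thm:time}, so one must treat $\rho_\delta$ as the recall bound of the specific navigable graph HNSW uses. Unlike HENN, where the $\eps$-net property is enforced deterministically, the HNSW guarantee is inherently probabilistic and therefore must be qualified with a ``with high probability'' clause, which is exactly the content of Corollary~\ref{cor:hnsw}.
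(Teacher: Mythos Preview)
Your proposal is correct and, in fact, considerably more detailed than what the paper offers. The paper does not provide a separate proof of Corollary~\ref{cor:hnsw} at all: the statement itself is the entire argument, namely the one-line observation that random sampling yields an $\eps$-net with some probability (Theorem~\ref{thm:enet}), so HNSW's randomly constructed layers inherit the hypotheses of Theorem~\ref{thm:time} probabilistically rather than deterministically.

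What you add beyond the paper is the quantitative bookkeeping: a Chernoff bound to control the realized layer sizes, the explicit choice $\delta_i = \delta'/L$ with a union bound over the $O(\log n / m)$ layers, and the remark that $\rho_\delta$ must be interpreted as the recall bound of HNSW's particular NSW intra-layer graph. None of this appears in the paper, which simply asserts the conclusion ``assuming that each layer forms an $\eps$-net with a specific probability'' and leaves the composition of failure probabilities implicit. Your treatment is what a rigorous version of the corollary would look like; the paper is content with the informal observation.
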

\vspace{-4mm}
\subsection{Preprocessing and Indexing}
\vspace{-1mm}
The index size of HENN is exactly similar to HNSW, which is linear in $n$. Following the Corollaries~\ref{cor:layer-size} and \ref{cor:layer-count}, this size can be calculated as:$\sum^L_{i=0} |\mathcal{L}_i| = \sum^L_{i=0} \frac{n}{2^m} = O(n)$.

The preprocessing time for constructing the HENN index depends on the chosen method for building the $\eps$-net. When using random sampling, a Monte Carlo procedure is employed, repeating the sampling a \underline{constant} number of times until an \enet is obtained (see Appendix~\ref{sec:app:build-epsnet} for details).
\vspace{-4mm}
\section{Experiments}\label{sec:exp}
\vspace{-2mm}
In this section, we evaluate the performance of the HENN structure on both synthetic (challenging) data instances and widely used approximate nearest neighbor (ANN) benchmark datasets. This section includes a comparative analysis between HENN and the HNSW structure. Additional experiments, covering a broader range of benchmark datasets and comparisons with alternative navigable graph baselines (beyond NSW), are provided in the supplemental material (Appendix~\ref{sec:app:exp}). 
Our code is built upon the widely-used standard C++ implementation of HNSW, hnswlib\footnote{\href{https://github.com/nmslib/hnswlib}{https://github.com/nmslib/hnswlib}}. The source code is available in \color{blue}\href{https://github.com/UIC-InDeXLab/HENN}{this repository}\color{black}. 

\subsection{Experiments Setup} 
\vspace{-2mm}
The experimental environment configuration is described in Appendix~\ref{sec:app:exp:conf}.

\stitle{Implementation Details} The core step in constructing the HENN index involves computing an $\eps$-net for each layer, beginning with the base layer $\mathcal{L}_0$, which includes the entire dataset. Appendix~\ref{sec:app:exp:impl} provides a detailed explanation of this process, along with the heuristics used in our implementation.

\stitle{Methods and Datasets} 
We compare HENN against the HNSW baseline, highlighting their key difference in layer construction: HENN enforces $\eps$-net layers, while HNSW uses random sampling. Both use navigable graphs built via greedy insertion. Experiments are conducted on both synthetic and real benchmark datasets, including SIFT~\cite{jegou2010product}, GloVe~\cite{pennington2014glove}, and MNIST~\cite{lecun1998gradient}. Additional implementation details, datasets details, and results are provided in the Appendix~\ref{sec:app:exp:method}.

\stitle{Evaluation}
To evaluate the methods, particularly in terms of runtime, we focus on worst-case performance. Specifically, for each reported value, we execute the method multiple times and return the \underline{maximum} runtime observed, rather than the average\footnote{Because in the average case, HNSW will behave like HENN based on Corollary~\ref{cor:hnsw} and Theorem~\ref{thm:time}.}.

\subsection{Experiment Results}
\vspace{-2mm}
We begin by presenting experiments on standard benchmark datasets, demonstrating that HENN achieves \underline{comparable recall and query time} to the baseline methods. In the subsequent subsection, we turn to more challenging synthetic datasets with controlled skewness and non-uniformity, where HENN consistently \underline{outperforms} the baselines.

\subsubsection{Benchmark Datasets}
\vspace{-2mm}
In this subsection, we compare the methods on three real datasets: SIFT, GloVe, and MNIST. Other benchmark comparisons are included in the supplemental material. 

\stitle{Query Speed vs. Recall tradeoff} Figure~\ref{fig:sift-glove} presents the query speed versus recall@10 for different datasets. Query speed is measured in queries per second (QPS), defined as the inverse of query time. Higher values of both recall and speed (toward the top-right of the plot) indicate a more effective ANN algorithm. On the SIFT dataset, HENN achieves performance comparable to the baseline in terms of both speed and recall. On the GloVe and MNIST datasets, HENN attains a higher area under the curve (AUC), indicating improved overall efficiency.

These results show that HENN maintains the effectiveness of prior methods on standard benchmarks, where ANN search is relatively less challenging in terms of worst-case behavior, where the difficulty primarily stems from high dimensionality and large data sizes. This also indicates that the SIFT dataset is fairly uniformly distributed and not skewed enough to challenge the baseline compared to the GloVe and MNIST.

\begin{figure}[ht]
    \centering
      \begin{minipage}[b]{0.32\textwidth}
        \includegraphics[width=\textwidth]{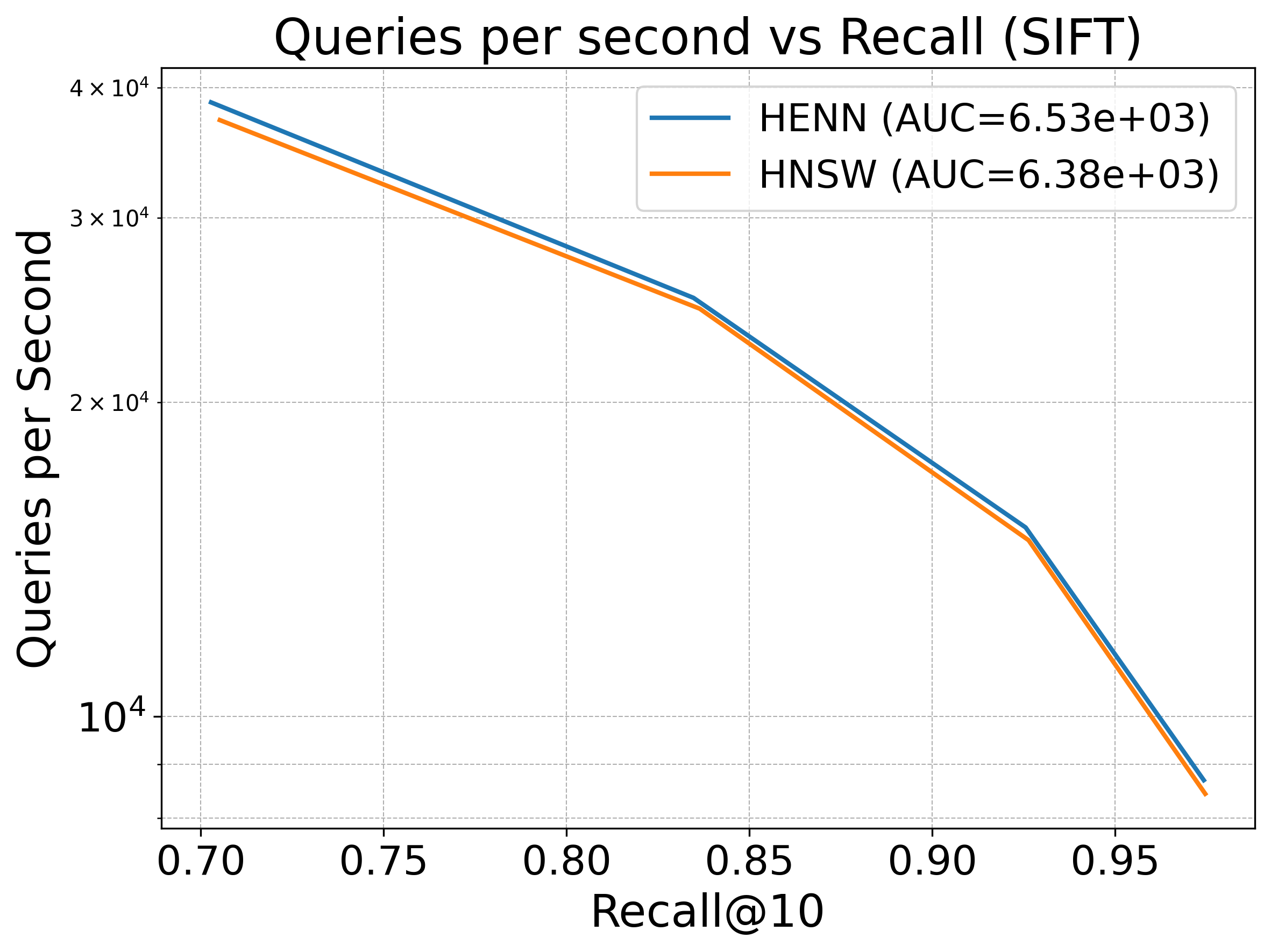}
        \caption*{SIFT: $d=128$}
      \end{minipage}
      \hfill
      \begin{minipage}[b]{0.32\textwidth}
        \includegraphics[width=\textwidth]{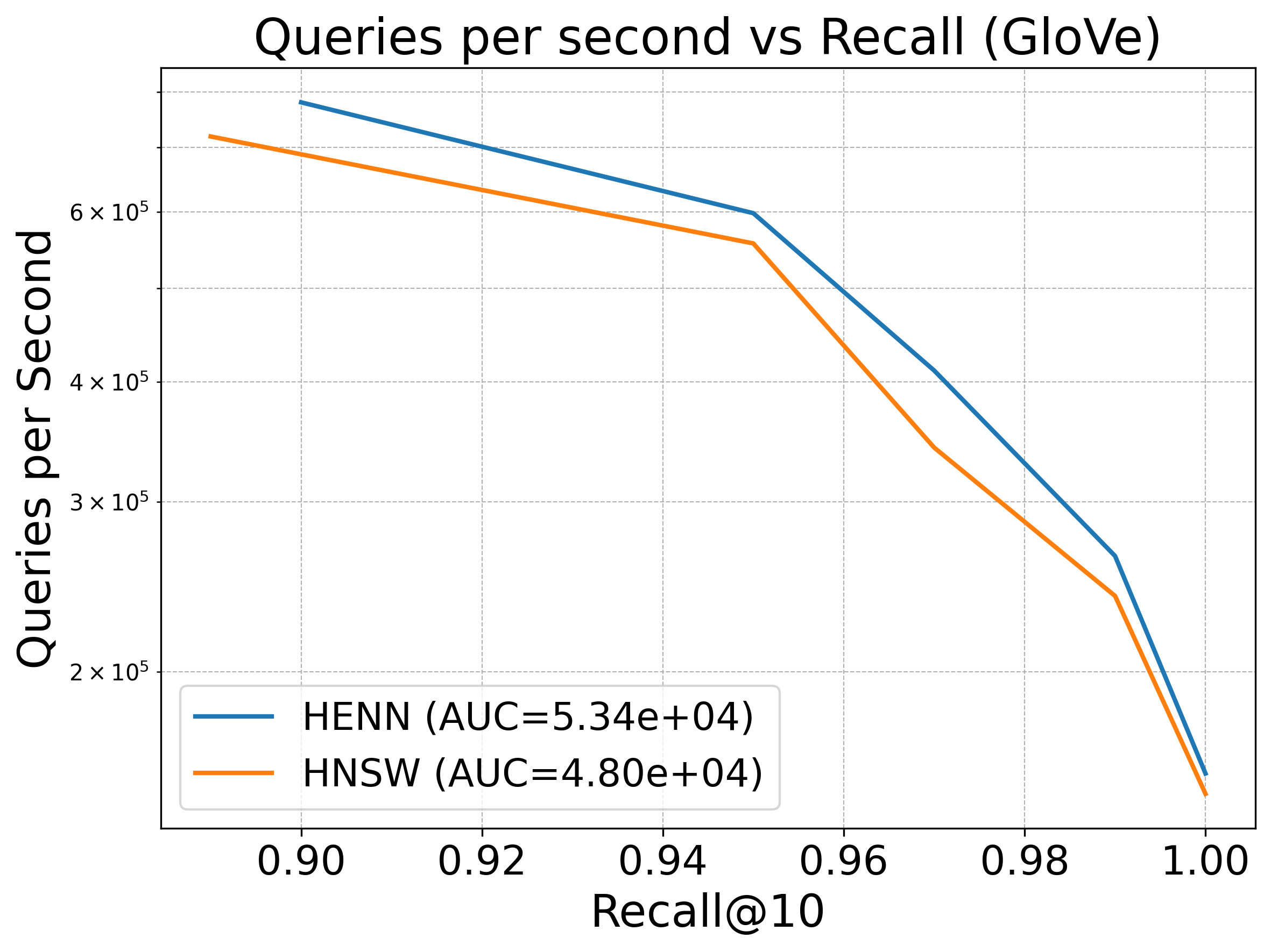}
        \caption*{GloVe: $d=100$}
      \end{minipage}
      \hfill
      \begin{minipage}[b]{0.32\textwidth}
        \includegraphics[width=\textwidth]{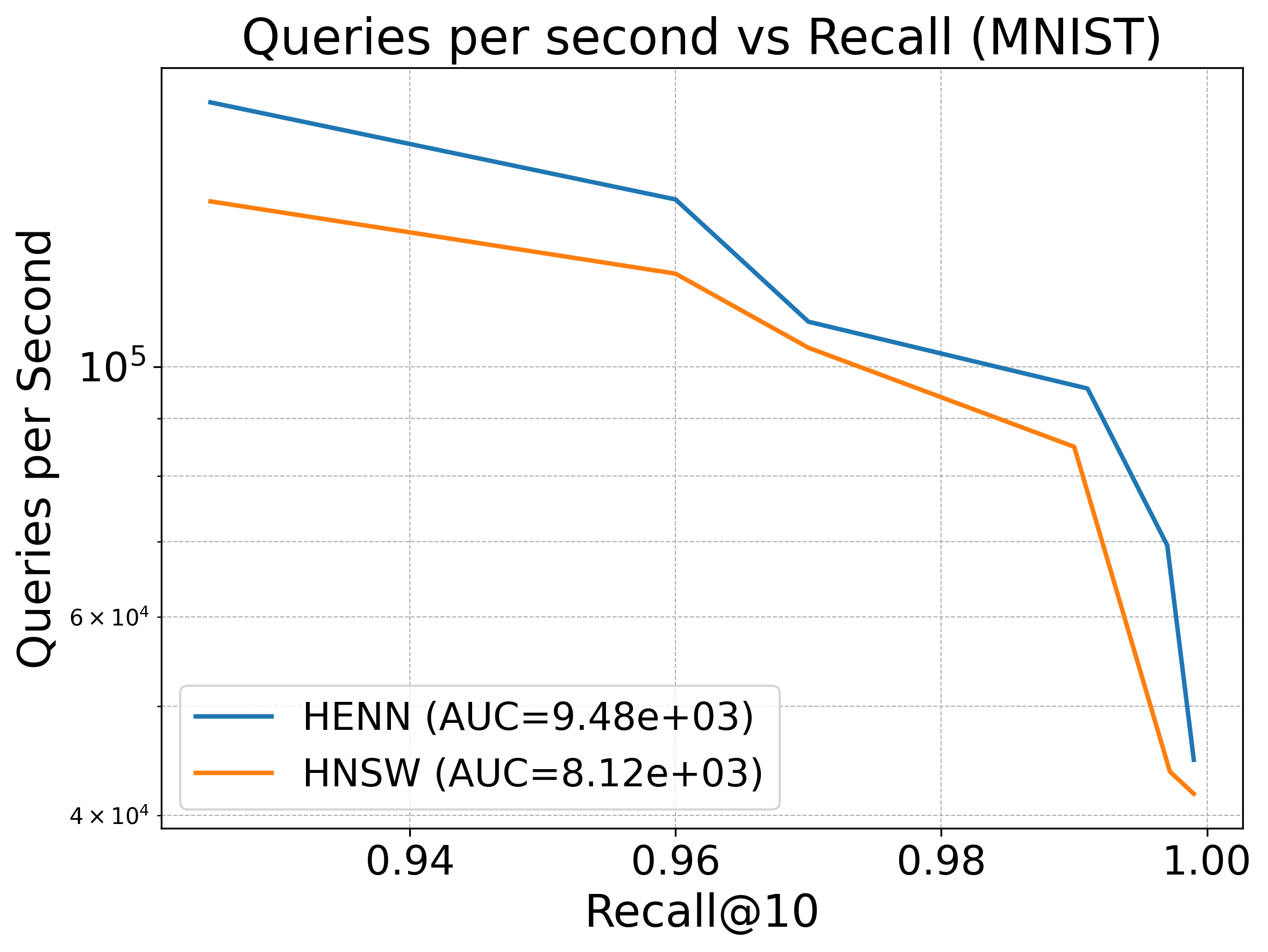}
        \caption*{MNIST: $d=784$}
      \end{minipage}
      \caption{Query speed vs. Recall@10 - up and to the right is better (higher AUC). HENN performs at least as well as the HNSW baseline on standard benchmark datasets. These plots are generated by varying the hyperparameter $ef$ in hnswlib.}
      \label{fig:sift-glove}
      \vspace{-4mm}
\end{figure}

\subsubsection{Synthetic Dataset}
\vspace{-2mm}
As previously discussed, we construct a non-uniform, skewed synthetic dataset with varying parameters to compare the worst-case performance of the methods. 

\stitle{Effect of $n$} Figures~\ref{fig:time-size-16} and \ref{fig:time-size-32} present a runtime comparison between HENN and HNSW as the dataset size $n$ varies. Notably, HENN achieves up to a \underline{1.7$\times$ speedup} over the baseline HNSW, with a larger performance gap as the dataset size increases. Additionally, the runtime of HENN shows a \underline{polylogarithmic growth} trend, consistent with the theoretical guarantee established in Theorem~\ref{thm:time}.

\stitle{Effect of $d$} Figure~\ref{fig:time-dim} presents a runtime comparison as the dimensionality $d$ varies. Similar to previous results, HENN maintains a consistent speedup. Additionally, the runtime of HENN exhibits a linear growth with respect to the dimensionality $d$.

\begin{figure}[ht]
    \centering
    \begin{minipage}[t]{0.32\textwidth}
        \includegraphics[width=0.99\textwidth]{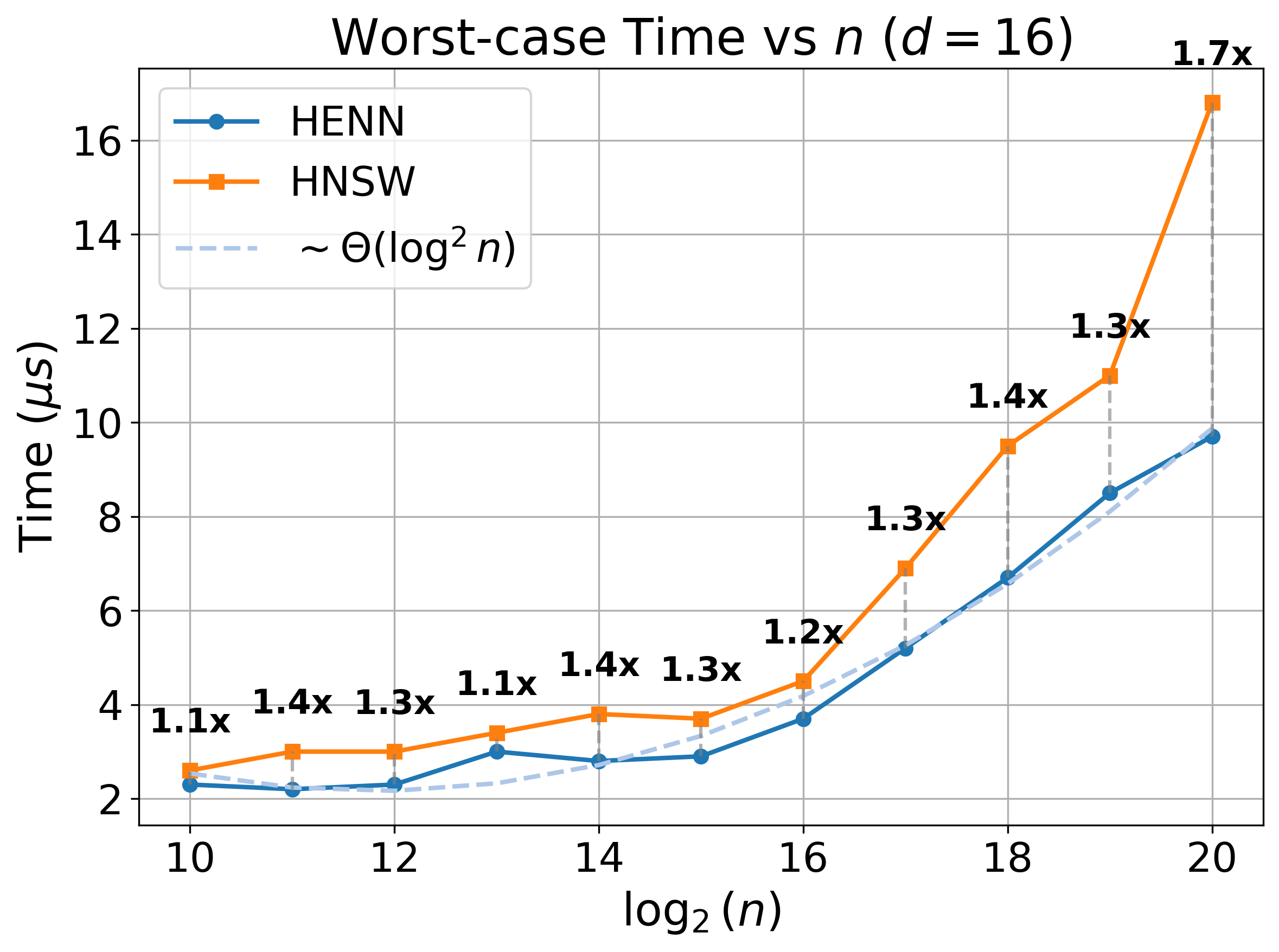}        
        \caption{$d = 16$}
        \label{fig:time-size-16}
    \end{minipage}
    \hfill
    \begin{minipage}[t]{0.32\textwidth}
        \includegraphics[width=0.99\textwidth]{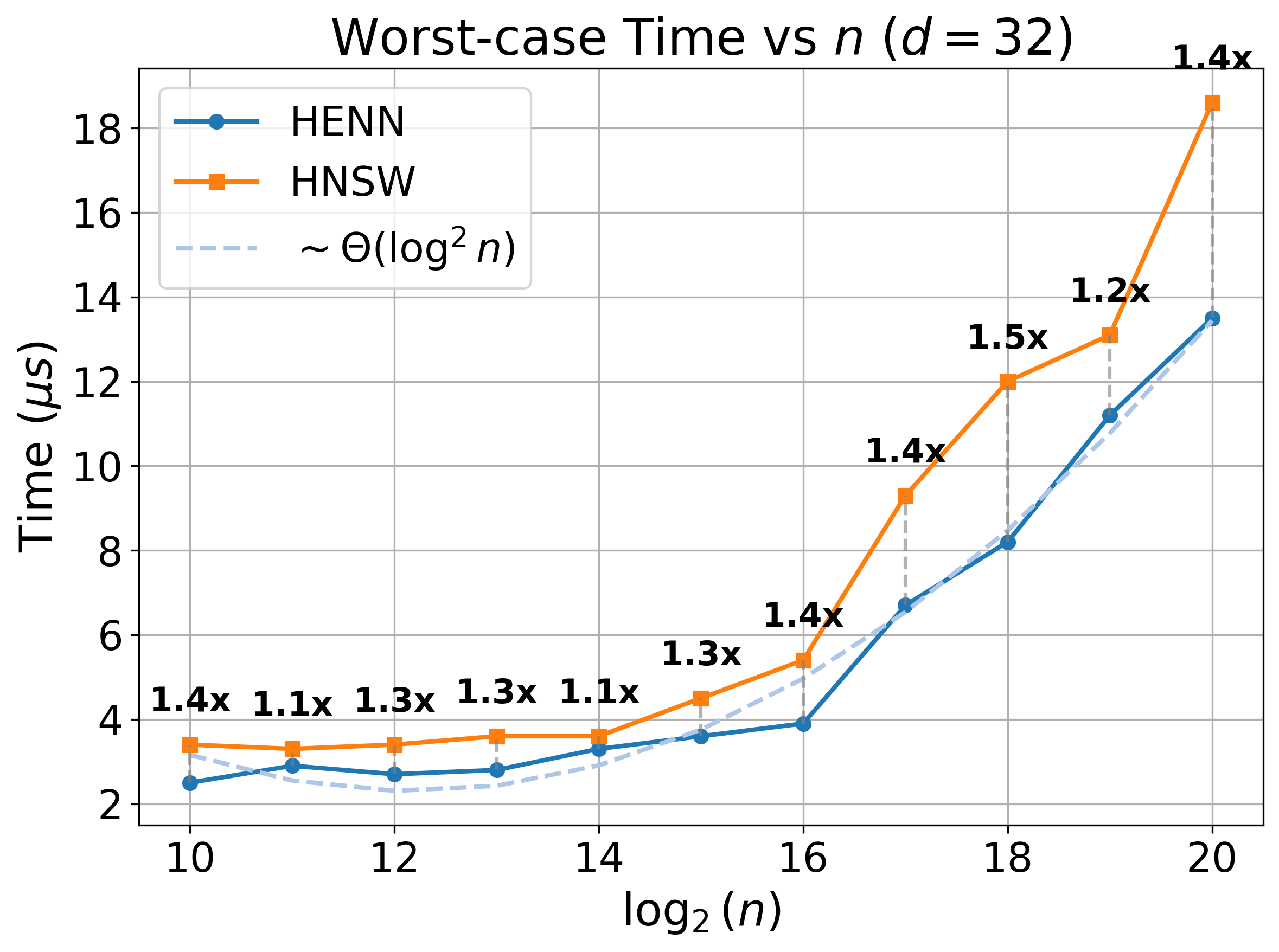}
        \caption{$d = 32$}
        \label{fig:time-size-32}
    \end{minipage}
    \hfill
    \begin{minipage}[t]{0.32\textwidth}
        \includegraphics[width=0.99\textwidth]{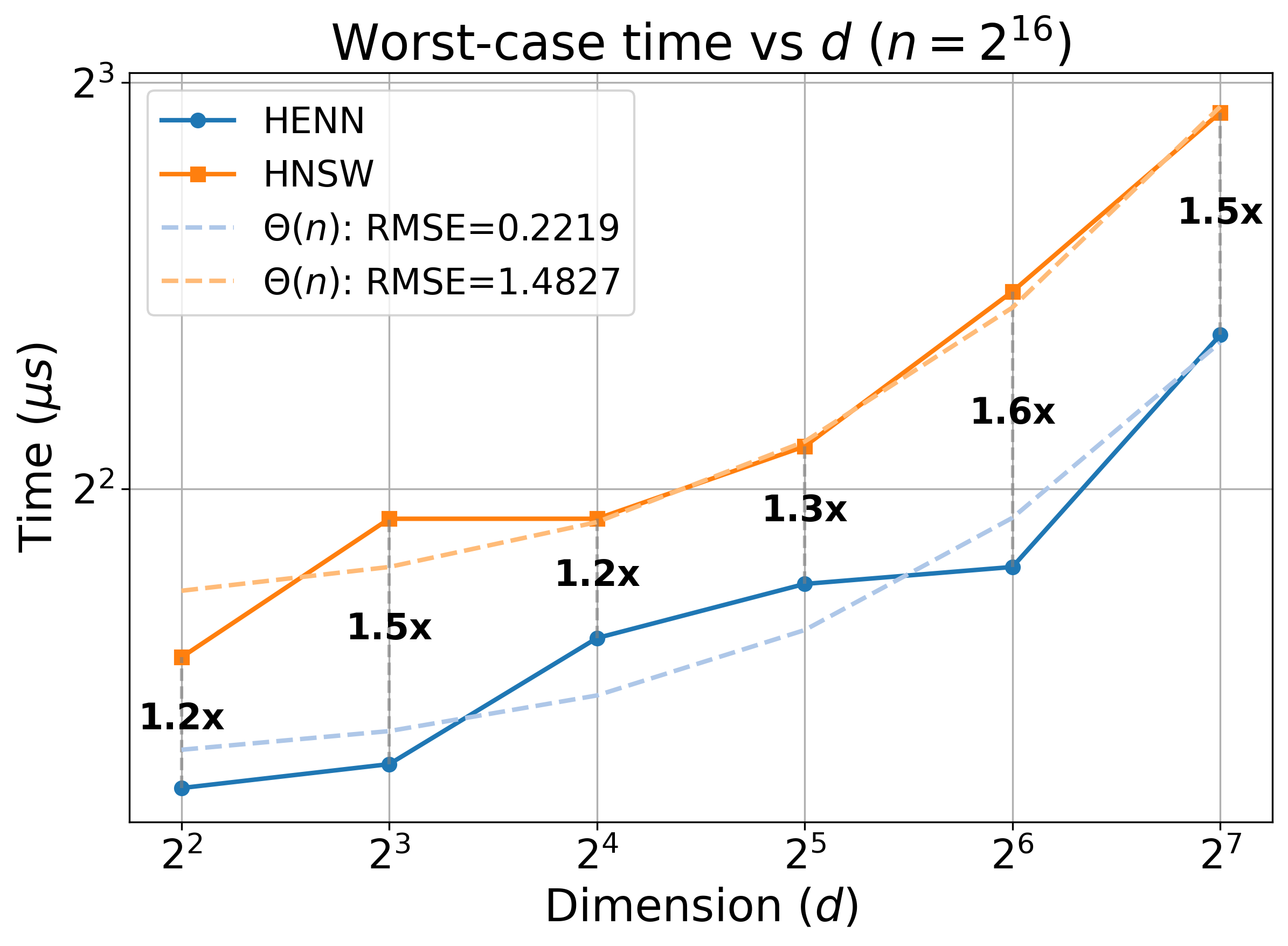}
        \caption{$n = 2^{16}$}
        \label{fig:time-dim}
    \end{minipage}
    \caption*{Comparing the worst-case time of the methods. The worst-case time is calculated by running each method multiple times and using the maximum value as the aggregation.
    The logarithmic trend is shown with a dotted line. RMSE, root mean squared error, measures how well the trends are fitted.}
    % \label{fig:time-size}
    \vspace{-4mm}
\end{figure}

\stitle{Number of Hops} Figures~\ref{fig:hops-n-16} and \ref{fig:hops-n-32} compare the two methods in terms of the total number of graph hops visited during the query execution. This metric serves as a proxy for the number of greedy steps performed by Algorithm~\ref{alg:query} to locate the approximate nearest neighbor. HENN consistently required fewer hops, leading to \underline{fewer distance computations} overall. The improvement is as great as \underline{1.9$\times$} fewer hops for HENN. This behavior reflects the more uniform and structured construction of hierarchical layers in HENN compared to the baseline. Figure~\ref{fig:hops-dim} shows this comparison on different dimensions $d$.
\vspace{-3mm}

\begin{figure}[ht]
    \centering
    \begin{minipage}[t]{0.32\textwidth}
        \includegraphics[width=0.99\textwidth]{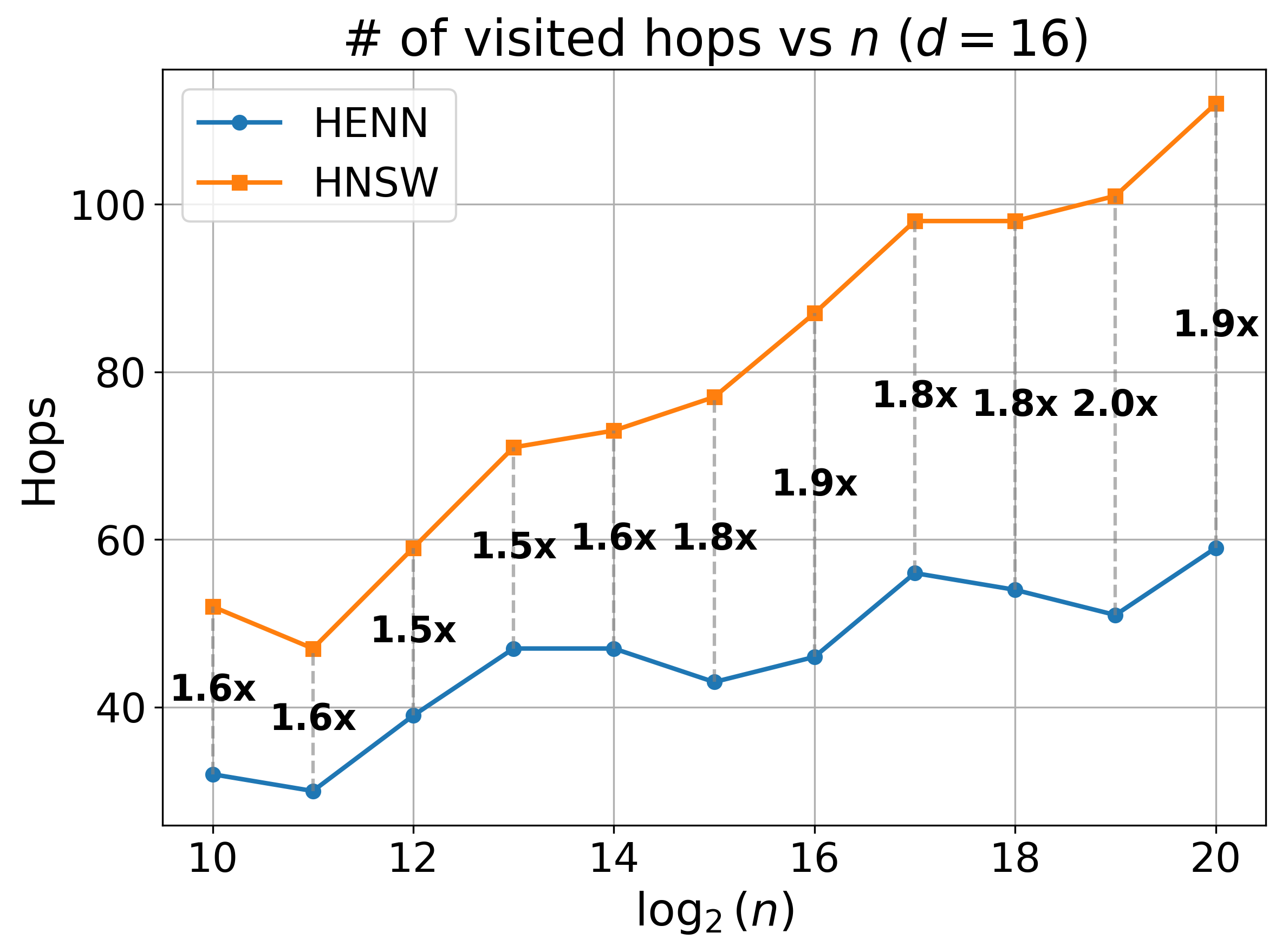}        
        \caption{$d = 16$}
        \label{fig:hops-n-16}
    \end{minipage}
    \hfill
    \begin{minipage}[t]{0.32\textwidth}
        \includegraphics[width=0.99\textwidth]{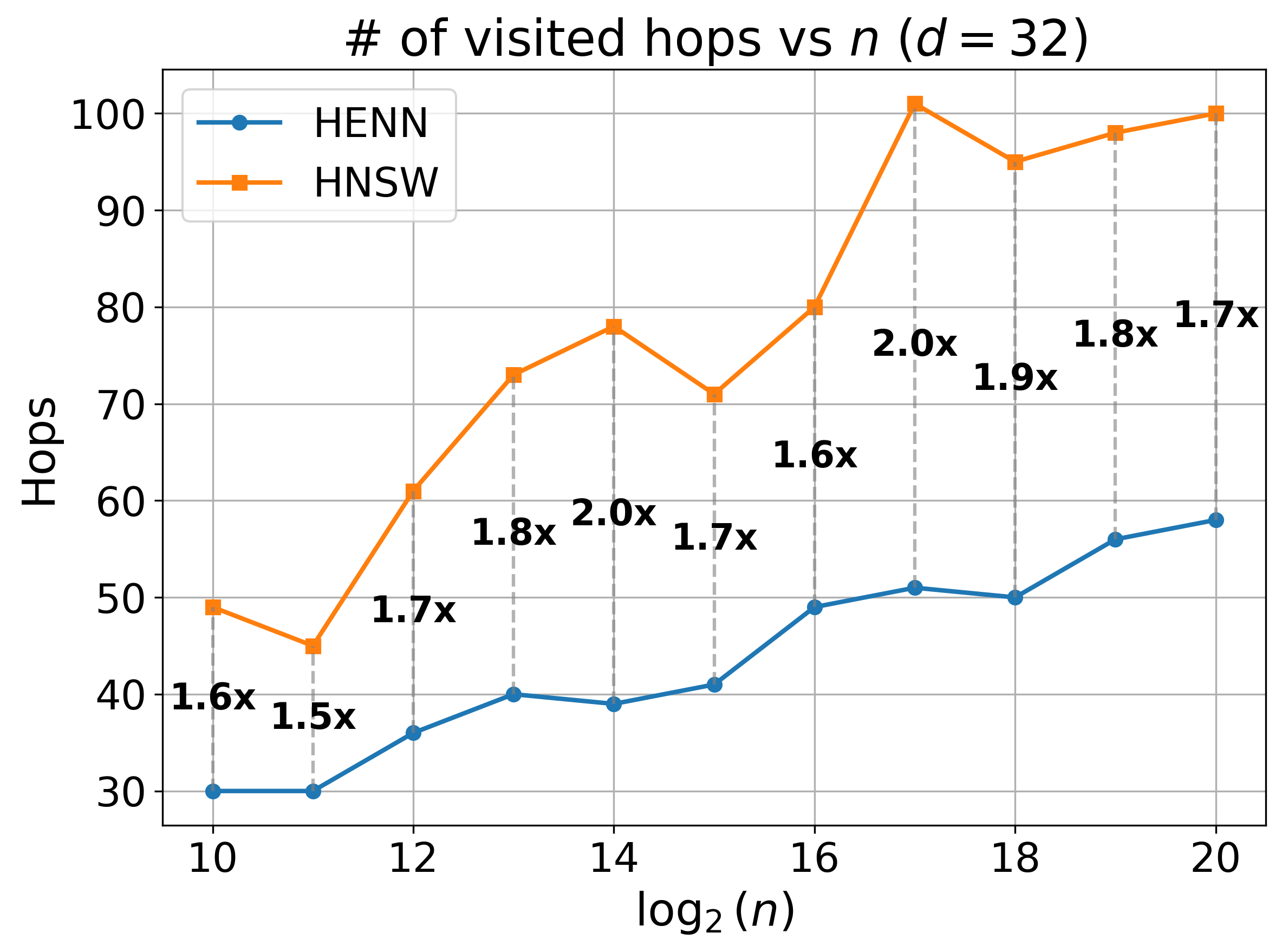}
        \caption{$d = 32$}
        \label{fig:hops-n-32}
    \end{minipage}
    \hfill
    \begin{minipage}[t]{0.32\textwidth}
        \includegraphics[width=0.99\textwidth]{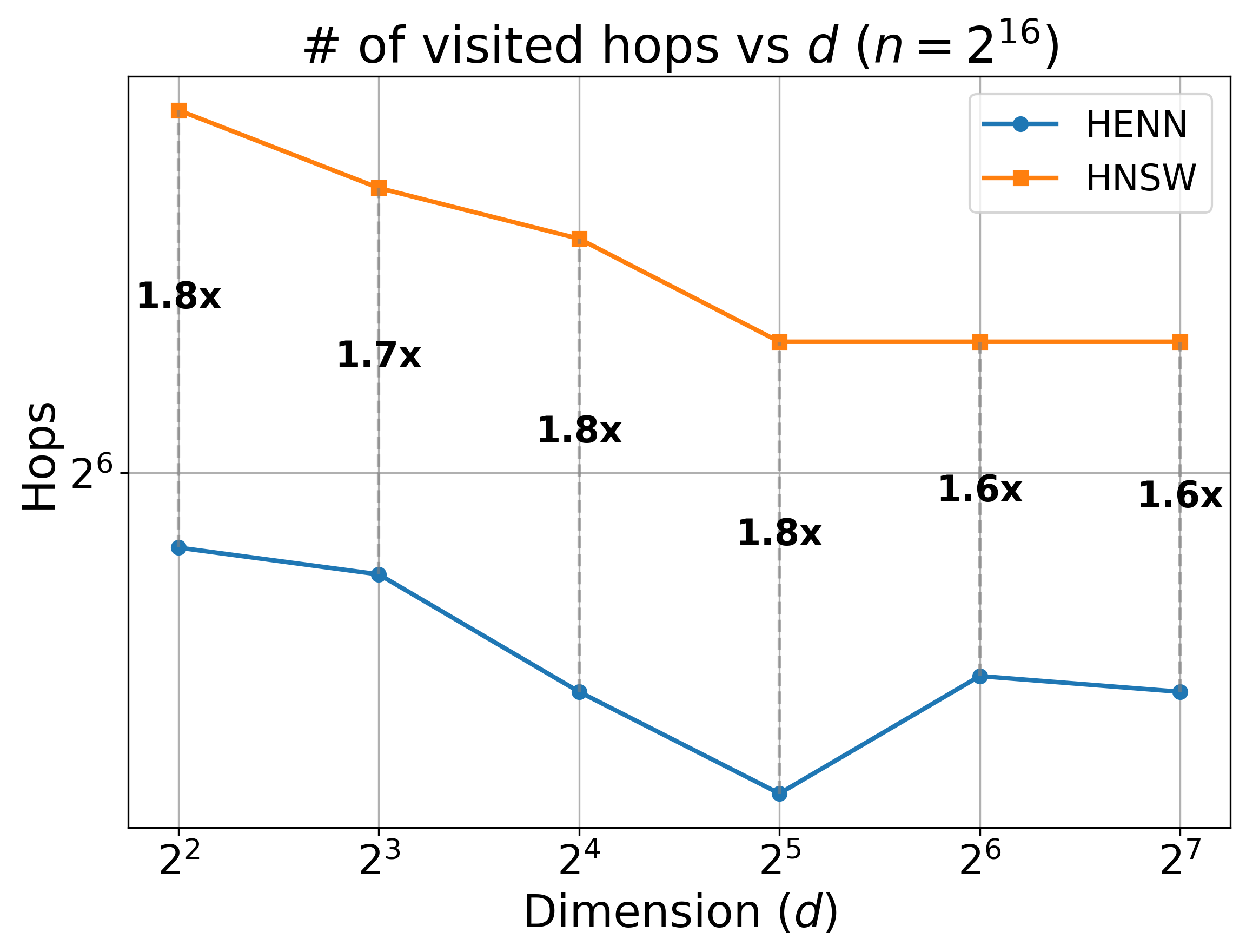}
        \caption{$n = 2^{16}$}
        \label{fig:hops-dim}
    \end{minipage}
    \caption*{Comparing the number of visited hops during the query time. The larger number of visited hops results in longer and less efficient query times.}
    \vspace{-2mm}
\end{figure}

\stitle{Query Speed vs. Recall tradeoff} Figure~\ref{fig:qps-recall} shows the trade-off between query speed and recall@10 for the evaluated methods. We vary the parameter $\lambda$ to control the skewness of the data distribution, with larger $\lambda$ producing more non-uniform datasets. Under highly skewed conditions, HENN demonstrates more speedup over HNSW. In more uniform settings, HENN achieves comparable recall to HNSW while maintaining competitive query speed.

\begin{figure}[htbp]
  \centering
  \begin{minipage}[b]{0.32\textwidth}
    \includegraphics[width=\textwidth]{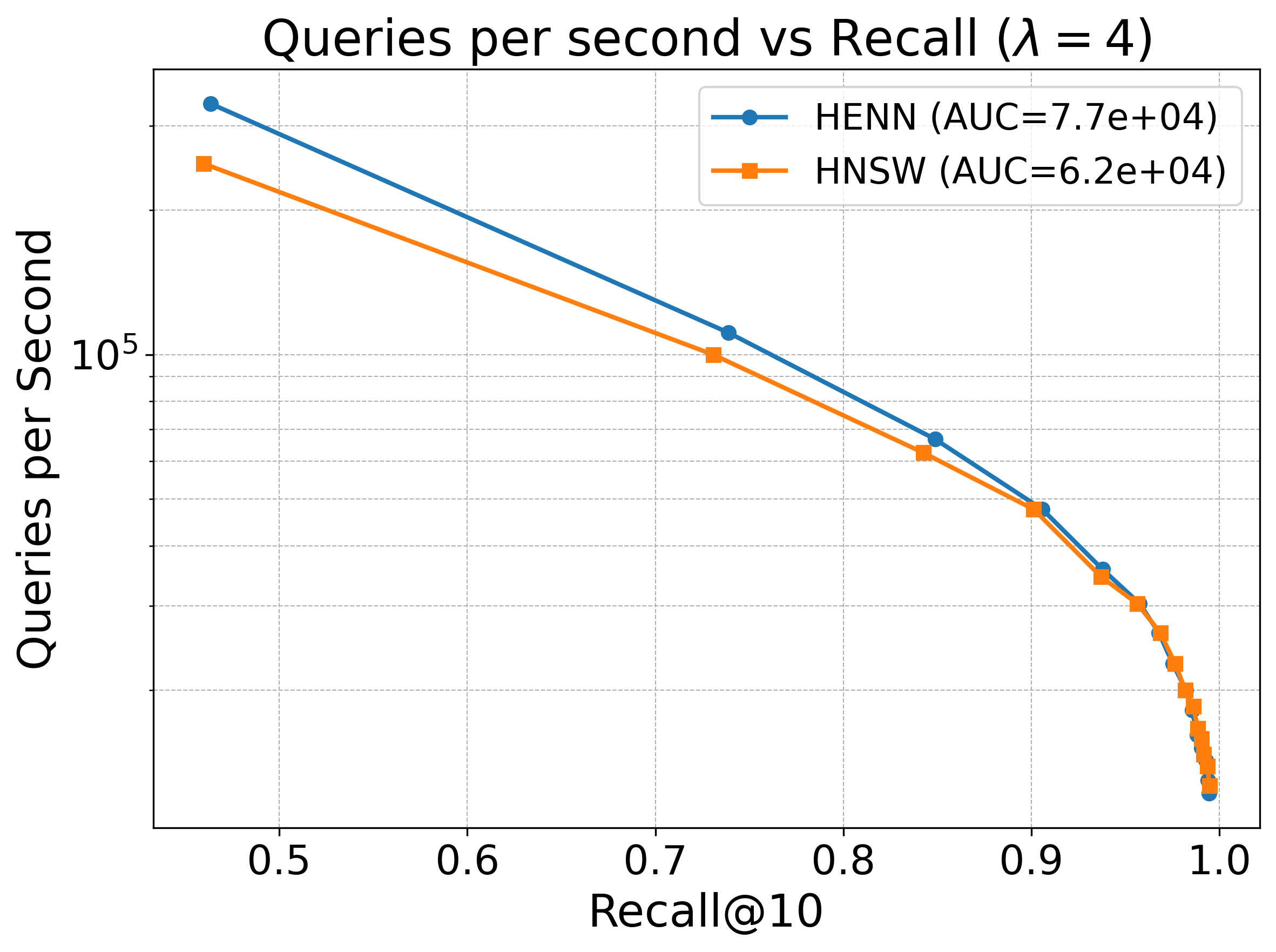}
    \caption*{$\lambda = 4$}
  \end{minipage}
  \hfill
  \begin{minipage}[b]{0.32\textwidth}
    \includegraphics[width=\textwidth]{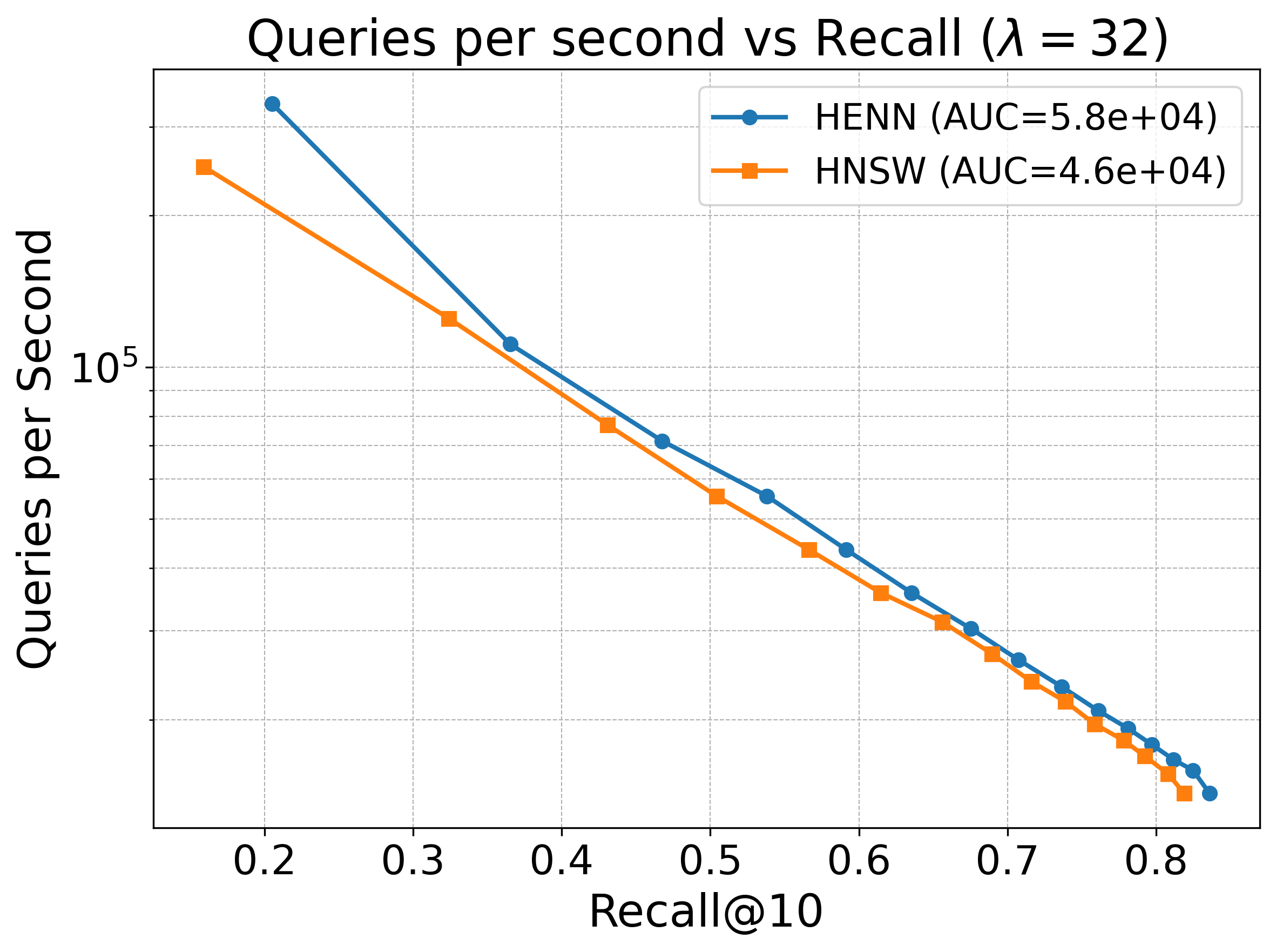}
    \caption*{$\lambda = 32$}
  \end{minipage}
  \hfill
  \begin{minipage}[b]{0.32\textwidth}
    \includegraphics[width=\textwidth]{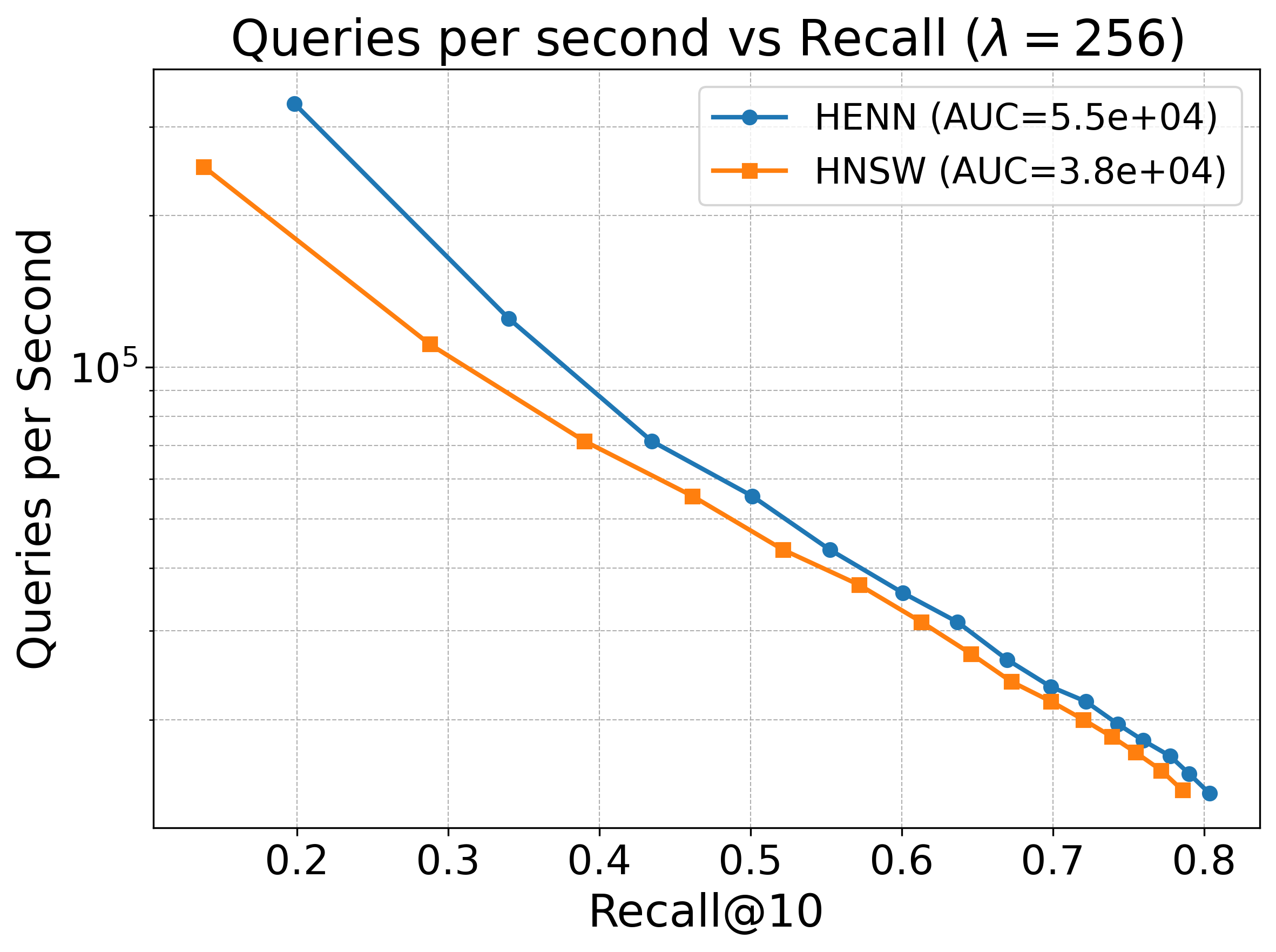}
    \caption*{$\lambda = 256$}
  \end{minipage}
  \caption{Query speed vs Recall@10 - up and to the right is better (higher AUC). The input is an exponential distribution. The larger the $\lambda$, the more skewed the data. In this setting, $d=32$ and $n = 50000$.}
  \label{fig:qps-recall}
  \vspace{-4mm}
\end{figure}

\stitle{Indexing Phase} Table~\ref{tab:index-size} compares the index sizes across datasets. Due to the construction strategy of HENN, there is no significant increase in index size relative to the baseline, as the layer sizes remain unchanged. Indexing time results are shown in Table~\ref{tab:index-time}, where we observe an increase in construction time for HENN. This overhead arises from the repeated random sampling required to satisfy the $\eps$-net condition. A detailed discussion of the trade-off between the success probability of finding an $\eps$-net and indexing time is provided in the Appendix~\ref{sec:app:exp}. 

\begin{table}[h]
\centering
\begin{minipage}[t]{0.48\textwidth}
    \centering
    \begin{tabular}{lccc}
    \toprule
    \textbf{Method} & \textbf{SIFT} & \textbf{GloVe} & \textbf{Synthetic} \\
    \midrule
    HENN  & 633.89 & 26.35 & 18.00 \\
    HNSW  & 633.88 & 26.35 & 18.00 \\
    \bottomrule
    \end{tabular}
    \vspace{2mm}
    \caption{Index size (MB). The Synthetic data has $n=2^{16}$, $d=32$}
    \label{tab:index-size}
\end{minipage}
\hfill
\begin{minipage}[t]{0.48\textwidth}
    \centering
    \begin{tabular}{lccc}
    \toprule
    \textbf{Method} & \textbf{SIFT} & \textbf{GloVe} & \textbf{Synthetic} \\
    \midrule
    HENN  & 79.60 & 1.36 & 1.01 \\
    HNSW  & 20.28 & 0.24 & 0.29 \\
    \bottomrule
    \end{tabular}
    \vspace{2mm}
    \caption{Indexing time (s). The Synthetic data has $n=2^{16}$, $d=32$}
    \label{tab:index-time}
\end{minipage}
\vspace{-3mm}
\end{table}
\vspace{-6mm}
\section{Advantages and Limitations}\label{sec:discussion}
\vspace{-2mm}
The HENN structure is \underline{easy to implement} and closely resembles HNSW, with a key difference in how layers are constructed. It ensures that each layer forms an $\eps$-net, which can be done by random sampling.
HENN is also \underline{modular} with respect to the choice of navigable graph used at each layer. Any graph structure suitable for similarity search, such as k-NN, Delaunay Triangulation (DT), or Navigable Small World (NSW), can be integrated into HENN.

HENN supports \underline{parallelization} during the indexing phase, enabling faster construction on large datasets. It also accommodates \underline{dynamic updates}, both insertions and deletions, while preserving theoretical guarantees. Further details are provided in Appendices~\ref{sec:app:par} and~\ref{sec:app:dynamic}.

The HENN structure is applicable to any metric space with induced bounded VC-dimension of $\Theta(d)$ (like all widely-used $\ell_p$-norm and angular distances for ANN). However, a limitation is that the theoretical guarantees no longer hold when the VC-dimension is \underline{unbounded}.

The provided time guarantees of HENN depend on the properties of \underline{the used navigable graphs}, such as the degree and recall bounds. In practice, commonly used graph structures often have constant degree and recall bounds. However, some graphs, such as Delaunay Triangulation (DT), can have degrees that grow exponentially with the dimension. As a result, this can cause a limitation for this structure.
These theoretical guarantees come at the cost of \underline{increased indexing time}. The construction time grows with the number of sampling iterations required to ensure that each layer forms an $\eps$-net.
\vspace{-3mm}
\section{Conclusion}
\vspace{-4mm}
We introduced HENN, a hierarchical graph-based structure for ANN search that combines theoretical guarantees with practical efficiency. By constructing $\eps$-net-based layers, HENN achieves provable polylogarithmic query time while remaining simple to implement. We also provide a probabilistic analysis for HNSW, offering insight into its empirical success. Experiments show that HENN performs robustly across both standard and adversarial datasets, making it a practical and scalable solution for nearest neighbor search.

\bibliography{ref}
% \newpage
% \input{checklist}
\newpage
\appendix
\section*{Appendix}
\subsection*{Table of Content}

\noindent
% \textbf{Contents} \\[0.5em]
\noindent
\makebox[\linewidth][l]{1.\quad Related Work \dotfill\ \ref{sec:related}}\\
\makebox[\linewidth][l]{2.\quad $\eps$-Net and Navigable Graphs \dotfill\ \ref{sec:buildcomponents}}\\
\makebox[\linewidth][l]{3.\quad Dynamic Setting \dotfill\ \ref{sec:app:dynamic}}\\
\makebox[\linewidth][l]{4.\quad Parallelization \dotfill\ \ref{sec:app:par}}\\
\makebox[\linewidth][l]{5.\quad Proofs \dotfill\ \ref{sec:app:proofs}}\\
\makebox[\linewidth][l]{6.\quad Pseudo-codes \dotfill\ \ref{sec:app:code}}\\
\makebox[\linewidth][l]{7.\quad More on Experiments \dotfill\ \ref{sec:app:exp}}\\
\section{Related Work}\label{sec:related}
In this section, we review the literature most relevant to our work. A broader overview of techniques for the Approximate Nearest Neighbor (ANN) problem, including references to comprehensive surveys, is presented in the Introduction section. Here, we focus primarily on \emph{hierarchical} approaches developed for solving the ANN problem.

A notable class of approaches for solving the ANN problem is based on hierarchical structures. One of the most widely used methods in this category is Hierarchical Navigable Small World (\textbf{HNSW})~\cite{malkov2018efficient}, which constructs a multi-layered structure of navigable small-world graphs to enable efficient search. Interestingly, the core idea of hierarchical organization can be traced back to earlier work in computational geometry, including \textbf{Cover Trees}~\cite{beygelzimer2006cover} and \textbf{Navigating Nets}~\cite{krauthgamer2004navigating}.

\paragraph{HNSW.}
Hierarchical Navigable Small World (HNSW) graphs~\cite{malkov2018efficient} construct a multi-layered hierarchy of navigable small world (NSW) graphs. An NSW graph serves as an efficient approximation of the Delaunay graph~\cite{harpeled2011geometric}, which is known to be an optimal structure for solving the approximate nearest neighbor (ANN) problem~\cite{malkov2014approximate}. The Delaunay graph is closely related to the Voronoi diagram, which partitions the space into cells based on their proximity to the points in the dataset. Unlike the Delaunay graph, which requires explicit geometric computation, NSW graphs are built incrementally by inserting points and connecting them to their approximate nearest neighbors~\cite{malkov2014approximate}. During insertion and querying, HNSW employs a greedy search algorithm (Algorithm~\ref{alg:query}) to navigate through the graph and locate nearby points. To improve scalability, HNSW organizes the data in a hierarchy where the number of nodes decreases exponentially across layers, resulting in $O(\log n)$ layers and a total space complexity of $O(n)$.

\paragraph{Cover Trees and Navigating Nets.} The Cover Tree data structure~\cite{beygelzimer2006cover} provides a scalable and theoretically grounded approach to nearest neighbor (NN) search in general metric spaces. The cover tree maintains logarithmic query time and linear space complexity under a measure of intrinsic data dimensionality. The structure recursively organizes points into a nested hierarchy of layers, enforcing both covering and separation invariants, which enable efficient traversal and pruning during NN queries, inspired by Navigating Nets. Navigating Nets~\cite{krauthgamer2004navigating} is another hierarchical structure based on $r$-nets designed for efficient proximity search in general metric spaces. The approach relies on constructing a sequence of nested $r$-nets, which progressively approximate the dataset at multiple scales. $r$-net is a subset of points that cover an $r$ distance around any points in the dataset. These hierarchical nets enable fast navigation by guiding the search from coarse to fine resolutions. It achieves logarithmic query and update times with respect to the number of points. Unlike randomized methods, Navigating Nets gives predictable performance. However, they are difficult to implement and do not scale well to large datasets. Because of this, they are rarely used in practical nearest neighbor applications today.

\paragraph{General Solutions to Nearest Neighbor Search.}
Solutions to the nearest neighbor problem can be categorized along several dimensions. One common distinction is between classical methods, which primarily target the exact NN problem. Examples include k-d trees~\cite{bentley1975multidimensional}, ball trees~\cite{omohundro1989five}, and Delaunay triangulations~\cite{har2012approximate}. While effective in low-dimensional spaces, these methods typically fail to scale in high-dimensional settings. Another broad category includes quantization-based methods~\cite{jegou2010product, ge2013optimized, ozan2016competitive}, which cluster the data and represent points by their assigned centroids (codewords), thereby approximating distances efficiently. Additionally, hashing-based methods such as Locality-Sensitive Hashing (LSH~\cite{datar2004locality}) provide theoretical guarantees and have been widely used for high-dimensional ANN, and finally, the graph-based methods~\cite{malkov2014approximate, malkov2018efficient, prokhorenkova2020graph, lu2021hvs}. For comprehensive overviews of these and other approaches, we refer the reader to the following surveys~\cite{pan2024survey, han2023comprehensive, aumuller2020ann, wang2021comprehensive, li2019approximate, liu2021revisiting}.
\section{\enet and Navigable Graphs}\label{sec:buildcomponents}

In this section, we provide background on the theory of $\eps$-nets, covering both sampling-based and discrepancy-based algorithms for constructing them. We also describe the heuristics used in our experiments for building $\eps$-nets. Finally, we discuss the background on navigation graphs that can be used in the HENN data structure.

\subsection{$\eps$-nets}\label{sec:app:build-epsnet}
First, we define \enet for a general range system $(\points, \mathcal{R})$:

\begin{definition}[\enet]
    Given a range space defined as $(\points, \mathcal{R})$, where $\points$ is a set of points and $\mathcal{R}$ is a family of subsets (ranges) over $\points$, an $\eps$-net for this system is a subset $\mathcal{A} \subseteq \points$ such that:
    \[
        \forall R \in \mathcal{R}: \quad \frac{|\points \cap R|}{|\points|} > \varepsilon \ \Rightarrow\ R \cap \mathcal{A} \neq \varnothing.
    \]
    In other words, the set $\mathcal{A}$ intersects all \emph{heavy ranges}, those that contain more than an $\eps$ fraction of the total points in $\points$.
\end{definition}

The popular \enet theorem is as follows~\cite{har2012approximate, haussler1986epsilon}:

\begin{theorem}[\enet Theorem]\label{thm:enet}
    For a range space defined like above, with VC-dimension $\delta$, a random independent sample of size $s$ will give an \enet with probability more than $1 - \varphi$, where $s$ is:
    \begin{align}\label{eq:enet:sizes}
        s = \max \left(\frac{1}{\eps} \log \frac{1}{\varphi} + \frac{\delta}{\eps}\log \frac{\delta}{\eps}\right)
    \end{align}
\end{theorem}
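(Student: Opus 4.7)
\textbf{Proof plan for Theorem~\ref{thm:enet} ($\eps$-net theorem).}
The plan is to use the classical \emph{double sampling} (or ``ghost sample'') argument of Haussler and Welzl, combined with the Sauer--Shelah lemma to control the number of distinct ranges induced on a finite sample. First I would introduce a second independent sample $S_2$ of the same size $s$, alongside the original sample $S_1$, and define the following two events:
\begin{align*}
  E_1 &= \set*{\exists R\in\mathcal{R}\ :\ \tfrac{|R\cap\points|}{|\points|}>\eps \text{ and } R\cap S_1=\varnothing},\\
  E_2 &= \set*{\exists R\in\mathcal{R}\ :\ R\cap S_1=\varnothing \text{ and } |R\cap S_2|\geq \tfrac{\eps s}{2}}.
\end{align*}
Here $E_1$ is precisely the event that $S_1$ fails to be an $\eps$-net, which is what we want to bound.

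The next step is the double-sampling reduction: I would show $\Pr[E_1]\leq 2\Pr[E_2]$. Conditional on a fixed heavy range $R$ witnessing $E_1$, the count $|R\cap S_2|$ is a binomial random variable with mean at least $\eps s$, so by a Chernoff/Markov bound it exceeds $\eps s/2$ with probability at least $1/2$ provided $s$ is at least a small constant over $\eps$. Integrating out the choice of $R$ gives the factor of two.

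Next I would bound $\Pr[E_2]$ by a union bound, but crucially the union is taken only over ranges \emph{restricted to the combined sample} $S_1\cup S_2$. By the Sauer--Shelah lemma applied to the range space of VC-dimension $\delta$, the number of distinct such restrictions is at most $O\bigl((2s)^{\delta}\bigr)$. For a fixed restricted range, I would use a symmetrization argument: condition on the multiset $S_1\cup S_2$ of size $2s$, and observe that the partition into $S_1$ and $S_2$ is a uniformly random split. Thus the conditional probability that all $\geq \eps s/2$ points of a fixed heavy range land in $S_2$ (and none in $S_1$) is at most $\binom{2s-\eps s/2}{s}/\binom{2s}{s}\leq 2^{-\eps s/2}$.

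Combining these pieces yields
\[
  \Pr[E_1]\ \leq\ 2\,\Pr[E_2]\ \leq\ 2\cdot O\bigl((2s)^{\delta}\bigr)\cdot 2^{-\eps s/2}.
\]
Demanding that the right-hand side be at most $\varphi$ and solving for $s$ gives the stated bound of order $\tfrac{1}{\eps}\log\tfrac{1}{\varphi}+\tfrac{\delta}{\eps}\log\tfrac{\delta}{\eps}$, with the first term absorbing the $\log(1/\varphi)$ factor and the second term absorbing the $\delta\log s$ factor after using $s=\Theta\bigl(\tfrac{\delta}{\eps}\log\tfrac{\delta}{\eps}\bigr)$ to close the recursion.

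I expect the main obstacle to be the double-sampling step: getting the right constants so that $\Pr[E_1]\leq 2\Pr[E_2]$ really holds requires being careful that $s$ is already large enough for the Chernoff/Markov step to succeed, which introduces a mild circularity that has to be resolved by verifying the final choice of $s$ is consistent with the preconditions. The Sauer--Shelah counting and symmetrization steps are essentially routine once the two events are set up correctly, and the final algebra to extract a closed-form $s$ from $c\,(2s)^{\delta}\,2^{-\eps s/2}\leq\varphi$ is a standard calculation solving an inequality of the form $s\geq a\log s+b$.
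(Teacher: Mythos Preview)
Your proposal is the classical Haussler--Welzl double-sampling proof and is mathematically sound: the ghost-sample reduction $\Pr[E_1]\leq 2\Pr[E_2]$, the Sauer--Shelah bound on the number of induced ranges, the symmetrization giving a $2^{-\eps s/2}$ tail, and the final algebra are all correct and standard.

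However, there is nothing to compare against: the paper does not prove Theorem~\ref{thm:enet} at all. It is stated as background (``The popular \enet theorem is as follows'') with citations to Har-Peled and to Haussler--Welzl, and is used as a black box in the construction of the HENN layers. So your write-up goes well beyond what the paper provides; if anything, you are reproducing the proof from the cited reference \cite{haussler1986epsilon} rather than matching anything in this paper.
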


There are two popular algorithms for building an $\eps$-net of size $O(\frac{\delta}{\eps}\log \frac{\delta}{\eps})$ for a range space $(\points, \mathcal{R})$ of VC-dimension $\delta$. A randomized {\bf sampling-based} and a deterministic {\bf discrepancy-based} algorithm.

\paragraph{Sampling-based Construction.} 
A randomized Monte-Carlo algorithm constructs an \enet by randomly drawing a set of samples from $X$, based on the value $s$ in Theorem~\ref{thm:enet}. 

\begin{corollary}
    For a constant failure probability $\varphi \leq \frac{1}{2}$, the sampling-based algorithm constructs an $\eps$-net of size $O(\frac{\delta}{\eps}\log \frac{\delta}{\eps})$ in time $O(\frac{\delta}{\eps}\log \frac{\delta}{\eps})$.
\end{corollary}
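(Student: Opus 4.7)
The plan is to observe that this corollary is an immediate consequence of Theorem~\ref{thm:enet} once $\varphi$ is fixed as a constant, so my proposal is essentially to unpack that theorem under the chosen parameters. The sampling-based algorithm simply draws $s$ independent uniform samples from $\points$ and returns them as the candidate net, so the entire argument reduces to bounding the sample size $s$ and the per-sample cost.

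First, I would substitute $\varphi \leq 1/2$ into Equation~\ref{eq:enet:sizes}. Since $\log(1/\varphi) = O(1)$ whenever $\varphi$ is bounded away from zero by a constant, the first term inside the maximum becomes $O(1/\eps)$, which is dominated by the second term $(\delta/\eps)\log(\delta/\eps)$ for any range space with $\delta \geq 1$ (the only nontrivial regime). Hence $s = O\!\left(\frac{\delta}{\eps}\log\frac{\delta}{\eps}\right)$, matching the claimed net size exactly.

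Second, I would argue that drawing $s$ independent uniform samples from $\points$ takes $O(s)$ time under the standard assumption of constant-time uniform random access to the point set (e.g., $\points$ stored as an array). Combining this with the size bound yields the claimed running time $O\!\left(\frac{\delta}{\eps}\log\frac{\delta}{\eps}\right)$, since no further work beyond sampling is performed.

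There is essentially no main obstacle to formalize; the only subtlety worth flagging is that this is a \emph{Monte Carlo} guarantee: with probability at most $\varphi \leq 1/2$ the returned sample may fail to be an $\eps$-net. If one later wishes to drive the failure probability below an arbitrary target $\varphi'$, the standard amplification argument repeats the sampling $O(\log(1/\varphi'))$ times, preserving the asymptotic size and time bounds up to a multiplicative constant. I would include a brief remark to this effect so that, when this corollary is invoked as a subroutine elsewhere in the paper (e.g., inside \textit{BuildEpsNet} in Algorithm~\ref{alg:preprocess}), the constant-probability assumption does not appear restrictive.
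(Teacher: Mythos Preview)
Your proposal is correct and matches the paper's approach: the paper states this corollary immediately after Theorem~\ref{thm:enet} without a separate proof, relying on exactly the substitution you describe (constant $\varphi$ makes $\log(1/\varphi)=O(1)$, so $s=O((\delta/\eps)\log(\delta/\eps))$) together with the observation that drawing $s$ samples costs $O(s)$. Your remark that the guarantee is Monte Carlo is also consistent with the paper, which explicitly calls it a ``Monte-Carlo algorithm'' and then wraps it in a Las-Vegas loop (Algorithm~\ref{alg:epsnet-rand}) afterward.
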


Indeed, a single run of the Monte-Carlo algorithm can fail to generate an \enet. 
Therefore, we use it within a Las-Vegas randomized algorithm (Algorithm~\ref{alg:epsnet-rand}) that repeats drawing samples until an \enet is found. 

\begin{algorithm}[ht]
\caption{Building $\eps$-net (Sampling-based Algorithm)}
\label{alg:epsnet-rand}
\begin{algorithmic}[1]
\Require The range space $(\points, \mathcal{R})$, value of $\eps$, failure probability $\varphi$, and the exponential decay $m$.
\Ensure The $\eps$-net $\mathcal{A}$.
\Function{BuildEpsNetSampling}{$\points, \eps, \varphi$}\Comment{Sampling-based algorithm}
    \Repeat
    \State $s \gets $ calculate the size (Equation~\ref{eq:enet:sizes})
    \State $\mathcal{A} \gets$ $s$ random samples with replacement from $\points$.
    \Until{{\sc IsEpsNet}$(\mathcal{A})= $ {\bf true}}
    \State {\bf Return} $\mathcal{A}$
\EndFunction
\end{algorithmic}
\end{algorithm}

\paragraph{Practical Implementation.}\label{par:practice:enet}
Algorithm~\ref{alg:epsnet-rand} relies on a function {\sc IsEpsNet} that, given a set $\mathcal{A}$, verifies if it is an \enet.
Such a verification would require enumerating the collection of valid ranges, i.e., in our case, all ring ranges that contain at least $\eps$ elements ($O(n^\delta)$).
Therefore, instead of verification over the entire collection, we sample a subset of ranges, uniformly at random, and verify whether $\mathcal{A}$ contains at least one element from each of the sampled ranges. The number of sample ranges is treated as a hyperparameter in our construction algorithm. 

Let $T_v$ be the time required to verify if a set $\mathcal{A}\subset X$ is an \enet.
The expected number of time the Las-Vegas algorithm requires to repeat the sampling is $\frac{1}{1-\varphi}$.
Hence, fixing the failure probability to $\varphi = \frac{1}{2}$, the expected time complexity of the algorithm is \(O\left( \frac{\delta}{\eps}\log \frac{\delta}{\eps} + T_v \right)\).

\begin{algorithm}[ht]
\caption{Building $\eps$-net (Discrepancy-based Algorithm)}
\label{alg:epsnet}
\begin{algorithmic}[1]
\Require The range space $(\points, \mathcal{R})$, value of $\eps$, failure probability $\varphi$, and the exponential decay $m$.
\Ensure The $\eps$-net $\mathcal{A}$.
\Function{BuildEpsNetDiscV1}{$\points, \eps, \mathcal{R}$}\Comment{Build $\eps$-net by providing $\eps$}
    \State $k \gets$ number of iterations (Equation~\ref{eq:disc-iters})
    \State $\mathcal{A} \gets \points$
    \For{$1 \leq i \leq k$}
        \State $\mathcal{A} \gets \textit{Halving}(\mathcal{A}, \mathcal{R})$\Comment{The halving step, with arbitrary matching.}
    \EndFor
    \State {\bf Return} $\mathcal{A}$
\EndFunction
\Function{BuildEpsNetDiscV2}{$\points, m, \mathcal{R}$}\Comment{Build $\eps$-net by providing $m$}
    \State $k \gets m$\Comment{Only continue $m$ times to reduce the size by $2^m$}
    \State $\mathcal{A} \gets \points$
    \For{$1 \leq i \leq k$}
        \State $\mathcal{A} \gets \textit{Halving}(\mathcal{A}, \mathcal{R})$\Comment{The halving step, with arbitrary matching.}
    \EndFor
    \State {\bf Return} $\mathcal{A}$
\EndFunction
\end{algorithmic}
\end{algorithm}

\paragraph{Discrepancy-based Construction.} 
In the following, we provide a high-level overview of the Discrepancy-based algorithm, and refer the reader to~\cite{harpeled2011geometric, chazelle2000discrepancy} for more details.
The algorithm (Algorithm~\ref{alg:epsnet}) works by iteratively {\it halving} the point-set $\points$, until reaching the desired size of $c_0 \frac{\delta}{\eps} \log \frac{\delta}{\eps}$, where $c_0$ is a large enough constant. 

In order to do so, it first constructs an arbitrary matching $\Pi$ of the points. A matching $\Pi$ is a set of pairs $(x, y)$ where $x, y\in\points$, it also partitions $\points$ into a set of $\frac{|\points|}{2}$ disjoint pairs. Given this matching, this algorithm randomly picks one of the points in each pair, removing the other point of the pair, resulting in a subset of remaining points $\points_1 \subset \points$ where $|\points_1| = \frac{|\points|}{2}$.

Continuing this process $k$ times for the following value of $k$

\begin{align}\label{eq:disc-iters}
    2^k = \frac{|\points|}{c_0 \frac{\delta}{\eps} \log \frac{\delta}{\eps}} 
\end{align}

results in the set $|\points_k|$ which is an $\eps$-net for $\points$. It is easy to make this process deterministic by following the conditional expectation method at each halving step~\cite{chazelle2000discrepancy}. By following a Sketch-and-Merge algorithm, one can achieve a near-linear running time for this construction~\cite{har2012approximate}:

\begin{corollary}
    The discrepancy-based algorithm finds an $\eps$-net of size $O(\frac{\delta}{\eps}\log \frac{\delta}{\eps})$ in time $O(\delta^{3\delta} \cdot (\frac{1}{\eps}\log \frac{\delta}{\eps})^\delta \cdot n)$.
\end{corollary}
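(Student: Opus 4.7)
The plan is to separate the claim into its two parts: (i) the output size of $O(\tfrac{\delta}{\eps}\log\tfrac{\delta}{\eps})$ together with the $\eps$-net property, and (ii) the running-time bound of $O(\delta^{3\delta}\cdot(\tfrac{1}{\eps}\log\tfrac{\delta}{\eps})^{\delta}\cdot n)$. For (i), the size follows immediately by design: the halving step exactly halves the working set, so after the $k$ iterations prescribed by Equation~\ref{eq:disc-iters} the remaining set has size $n/2^{k}=c_0\tfrac{\delta}{\eps}\log\tfrac{\delta}{\eps}$. The nontrivial content is that the final subset is an $\eps$-net, which I would derive from a discrepancy bound on each halving step: on a range space of VC-dimension $\delta$, there exists (via the conditional-expectation derandomization) a perfect matching such that for every range $R$ and the resulting halved set $\mathcal{A}'$, $\bigl||R\cap\mathcal{A}'|-\tfrac{1}{2}|R\cap\mathcal{A}|\bigr|$ is bounded by $O\bigl(\sqrt{|R\cap\mathcal{A}|\log|\Pi_{\mathcal{R}}(\mathcal{A})|}\bigr)$, where the projection size $|\Pi_{\mathcal{R}}(\mathcal{A})|$ is controlled by Sauer's lemma as $O(|\mathcal{A}|^{\delta})$. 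Telescoping these discrepancies across the $k$ rounds shows that any range with density above $\eps$ retains at least one surviving point, which is exactly the $\eps$-net condition.

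For (ii), a direct implementation of the halving step on the whole set $\points$ would enumerate $O(n^{\delta})$ distinct ranges and therefore cost polynomially in $n$, which is too slow. I would instead invoke the \emph{sketch-and-merge} scheme: partition $\points$ into $n/g$ blocks of size $g=\Theta\bigl((\tfrac{1}{\eps}\log\tfrac{\delta}{\eps})\bigr)$, run the discrepancy-based halving locally on each block to produce a small $\eps$-net of size $O(\tfrac{\delta}{\eps}\log\tfrac{\delta}{\eps})$, and then combine pairs of sketches bottom-up along a balanced binary tree, halving after each merge to restore the size bound. The cost of a single merge-and-halve on two sketches of size $g$ is dominated by the combinatorial work of enumerating shatter patterns and applying conditional expectation: this is where the $\delta^{3\delta}$ factor arises, combined with $g^{\delta}$ from the shatter bound. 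Summing the $n/g$ leaf computations plus the internal merges (whose total work is proportional to the leaves since the tree is balanced and each level does $O(n/g)$ merges of bounded-size sets) yields $O\bigl(\delta^{3\delta}\cdot(\tfrac{1}{\eps}\log\tfrac{\delta}{\eps})^{\delta}\cdot n\bigr)$.

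The main obstacle will be verifying that the sketch-and-merge recursion does not inflate the intermediate net size beyond $O(\tfrac{\delta}{\eps}\log\tfrac{\delta}{\eps})$ after each merge, because the union of two $\eps$-nets on disjoint halves is only an $\eps$-net of the union at a worse scale, so the halving must both re-shrink and preserve the discrepancy invariant that certifies the $\eps$-net property globally on $\points$. I would handle this by carrying a slightly stronger invariant along the tree (a weighted discrepancy bound tied to the current block's size), and checking that combining two children and re-halving preserves the invariant with the constants chosen in Equation~\ref{eq:disc-iters}. For the running-time accounting, I would simply verify the geometric-series cancellation across levels of the merge tree so that only the constant at the top level matters. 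Once those two bookkeeping arguments are in place, the corollary follows by citing the detailed analyses in~\cite{chazelle2000discrepancy, harpeled2011geometric, har2012approximate}.
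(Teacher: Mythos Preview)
Your proposal is a faithful sketch of the standard discrepancy-based construction from \cite{chazelle2000discrepancy, harpeled2011geometric}, and the technical content is correct. However, you should be aware that the paper does not actually \emph{prove} this corollary: it is stated immediately after the sentence ``By following a Sketch-and-Merge algorithm, one can achieve a near-linear running time for this construction~\cite{har2012approximate}'' and is simply imported as a known result from the cited references. There is no proof in the paper to compare against.

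In that sense, your write-up goes well beyond what the paper does---you are reconstructing the argument that the paper delegates to the literature. The decomposition you give (halving with a discrepancy guarantee derandomized via conditional expectations, Sauer's lemma to bound the number of distinct range patterns, and sketch-and-merge along a balanced binary tree to get the near-linear dependence on $n$) is exactly the classical route, and your identification of where the $\delta^{3\delta}$ and $(\tfrac{1}{\eps}\log\tfrac{\delta}{\eps})^{\delta}$ factors come from is accurate. The one place where your sketch is a bit loose is the ``main obstacle'' paragraph: in the standard analysis one does not carry an $\eps$-net invariant up the merge tree but rather an $\eps$-\emph{approximation} (relative-error sample) invariant, since $\eps$-approximations compose cleanly under union and halving while $\eps$-nets do not; the final output is then an $\eps$-net because any $\eps$-approximation is in particular an $\eps$-net. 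If you were to flesh this out fully, that is the invariant you would want to track. But for the purposes of this paper, a one-line citation to \cite{har2012approximate, chazelle2000discrepancy} is all that is expected.
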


\paragraph{Comparison.} 
The randomized algorithm is straightforward to implement, requiring only random sampling from each layer $\mathcal{L}_i$ to construct the subsequent layer $\mathcal{L}_{i+1}$. However, it is inherently randomized and provides running-time guarantees in expectation. Furthermore, its time complexity depends on the time to verify if the selected set is indeed an \enet. %To improve the probability of success, one can employ Monte-Carlo sampling techniques. 

In contrast, the deterministic discrepancy-based algorithm deterministically constructs an $\eps$-net by progressively halving each layer $\mathcal{L}_i$. With the hyperparameter $m$, as the exponential decay of the HENN graph, this process involves only halving $\mathcal{L}_i$ up to $m$ times to identify the next layer $\mathcal{L}_{i+1}$ (See BuildEpsNetDiscV2 in Algorithm~\ref{alg:epsnet}). Nevertheless, the running time of the discrepancy-based algorithm depends on the dimensionality of the input points. A pseudo-code of these two algorithms is provided in Algorithm~\ref{alg:epsnet}.

\subsection{Navigable Graphs}\label{sec:app:nav-graph}
Graph-based algorithms for the ANN problem typically begin by constructing a graph on the given dataset $\points$. A key property of these graphs is \emph{navigability}, which ensures that the Greedy Search algorithm (Algorithm~\ref{alg:query}) can be effectively applied~\cite{malkov2014approximate}. Specifically, navigability means that by following a sequence of locally greedy steps, the algorithm can successfully reach an approximate nearest neighbor of the query point $q$.

According to this definition, a \emph{complete graph} over the point set is trivially navigable. The most optimized navigable graph can, in principle, be obtained by constructing the dual of the Voronoi diagram of the points, known as the \emph{Delaunay triangulation}~\cite{deberg2008computational}. However, constructing this graph is computationally challenging, particularly in high dimensions, due to the curse of dimensionality.

An efficient approximation of the Delaunay triangulation can be achieved through a simple randomized algorithm that incrementally inserts points and connects each new point to its nearest neighbors in the existing graph structure. This approach forms the basis of the \emph{Navigable Small World} (NSW) graph~\cite{malkov2014approximate}. The HNSW algorithm adopts a similar strategy within each layer, while introducing additional heuristics to improve practical performance, such as adding random exploration edges between points. These heuristics can also be incorporated into the HENN structure as well, treating the navigable graph as a black-box~\cite{malkov2018efficient}.

A $k$-NN graph can also be used; however, it is well known that for small values of $k$, such graphs tend to exhibit numerous local minima~\cite{liu2021revisiting}.

As previously discussed, any black-box navigable graph constructed over the points within a single layer can be easily integrated into the HENN framework. We refer the reader to Appendix~\ref{sec:app:exp} for a comparative analysis of different navigation graph choices integrated within HENN.

\subsubsection{Navigation Graph Construction via Dimensionality Reduction}
We introduce a heuristic that leverages dimensionality reduction to construct a navigation graph for each individual layer of HENN. The procedure begins by reducing the dimensionality of the data points in a given layer to a low-dimensional space (typically 2D or 3D). Subsequently, we compute the exact Delaunay triangulation (DT) over the reduced representation and construct this graph on the point set. This approach is computationally efficient, as DT construction in low dimensions is fast and does not suffer from the challenges associated with high-dimensional geometric computations.

For dimensionality reduction, we employ techniques that aim to preserve pairwise distances, such as t-SNE~\cite{van2008visualizing}. Additional techniques and details regarding this heuristic are discussed in Appendix~\ref{sec:app:exp}.

\section{Dynamic Setting}\label{sec:app:dynamic}
In this section, we present a procedure for maintaining the HENN structure under dynamic updates to the point set $\points$. The supported operations include \texttt{Insert(x)}, which adds a new point $x$, and \texttt{Delete(x)}, which removes an existing point $x \in \points$.

Based on the $\eps$-net construction described in Section~\ref{sec:app:build-epsnet}, a random sample of an appropriate size forms an $\eps$-net of $\points$ with high probability. We denote this required sample size by $f_\eps$, given by:

\[
    f_\eps = O\left(\frac{d}{\eps} \log \frac{d}{\eps}\right)
\]

Each layer of HENN, denoted by $\mathcal{L}_i$, is constructed as a random sample of size $f_{\eps(|\mathcal{L}_{i-1}|)}$ from the previous layer (see Equation~\ref{eq:eps}). Consequently, the problem reduces to dynamically maintaining a random sample $S$ of size $f$ from the point set $\points$ \footnote{$S$ is a random sample with replacement, with each element having a probability $\frac{f_\eps}{n}$ being in $S$.}.

This problem can be addressed using Reservoir Sampling~\cite{vitter1985random} and the \emph{Backing Samples} technique~\cite{gibbons2002fast}. The key idea is to handle \texttt{Insert(x)} operations by probabilistically adding the new element to the sample $S$ using a non-uniform coin toss. To support deletions, a larger backing sample is maintained beyond size $f$, allowing for efficient resampling of $S$ once the size drops below a threshold. This approach yields a constant amortized update time.

According to this, we can maintain the HENN structure dynamically:

\paragraph{\texttt{Insert(x)}:} To insert a new point, dynamic updates are performed starting from layer $\mathcal{L}_1$ and proceeding upward through the hierarchy, stopping at the highest layer where the new point is included. This process takes \( O(\log n) \) time, matching the insertion time complexity of HNSW.

\paragraph{\texttt{Delete(x)}:} Deletion begins at layer $\mathcal{L}_1$, where the point $x$ is removed if present. If the size of a layer falls below a critical threshold (as discussed in~\cite{gibbons2002fast}), the layer must be resampled. Following resampling, the HENN structure is rebuilt from that layer up to the root, which incurs a cost of \( O(n \log n) \) in the worst case. However, since such rebuilding occurs infrequently, only when the layer size drops significantly (e.g., \( f < c_0 n \) for a constant $c_0$), the amortized cost remains \( O(\log n) \).

\section{Parallelization}\label{sec:app:par}
In this section, we present a parallelized approach to constructing the HENN index during preprocessing. While the original HENN construction, shown in Algorithm~\ref{alg:preprocess}, runs sequentially by building layers $\mathcal{L}_1$ through $\mathcal{L}_m$ from the base point set $\points$, this process can be parallelized to reduce preprocessing time.

To enable parallelization, we exploit the fact that layer sampling in HENN is performed with replacement. Given $p$ parallel CPU cores, we can independently generate samples for each layer in parallel. Specifically, each core performs independent sampling, effectively achieving a $p$-factor speedup for the sampling phase performed on each layer.

HNSW also uses a parallelization to enhance preprocessing~\cite{malkov2018efficient}, where each input point is processed independently. For each point, a random level is assigned, and the point is inserted into the corresponding layers. This results in a total construction time of $O(n \log n)$, which can be reduced to $O(\frac{n \log n}{p})$ under parallel execution with $p$ cores.

For HENN, the construction begins by sampling a subset of size $\frac{n}{2^m}$ for the first layer $\mathcal{L}_1$, and recursively building higher layers. Each layer $\mathcal{L}_i$ requires constructing a navigation graph, the complexity of which depends on the chosen method. For instance, NSW-based graph construction requires $O(|\mathcal{L}_i| \log |\mathcal{L}_i|)$ time on the $i$th layer. Excluding graph construction, the sampling phase alone can be executed in $O(\frac{n \log n}{p})$ time using $p$ cores.
\section{Proofs}\label{sec:app:proofs}
\subsection{Proof of Lemma~\ref{lem:layer-size}}
\begin{proof}
    Fix a layer $i$, let $s = |\mathcal{L}_{i-1}|$. The size of an $\eps$-net according to the described procedure is $O(\frac{d}{\eps}\log \frac{d}{\eps})$. As a result, for the layer $\mathcal{L}_i$ and a large enough constant $c_1 < c_0$, we have:

    \begin{align*}
        |\mathcal{L}_i| &\leq c_1 \frac{d}{\eps(s)} \log\frac{d}{\eps(s)} \\
        &=c_1 \frac{s}{c_0 2^{m} \log s} \log \frac{s}{c_0 2^{m} \log s}\\
        &\leq c_1 \frac{s}{c_0 2^{m} \log s} \log \frac{s}{2^{m}}\\
        &= c_1 \frac{s}{c_0 2^{m}} \leq \frac{s}{2^{m}}
    \end{align*}
\end{proof}

\subsection{Proof of Lemma~\ref{lem:two-layer-case}}
\begin{proof}
    Since $\mathcal{L}$ is an $\eps$-net of $\points$, we know that each (ring) range $R$ of size more than $\eps \cdot n$, intersects with $\mathcal{L}$. In other words:

    \[
        |R| \geq \eps n \xrightarrow{} \mathcal{L} \cap R \neq \varnothing
    \]

    This comes from the definition of an $\eps$-net. 
    
    Based on the definition of Recall Bound, we know that there are at most $\rho_\delta$ more points, denoted as $P_{\mathcal{L} }= \{p_1, p_2, \cdots, p_{\rho_\delta}\}$, in $\mathcal{L}$ that are closer to $q$ than $\overline{p}$ (with probability more than $\delta$). 
    
    Assume that the points in $P_{\mathcal{L}}$ are sorted based on their distance to $q$, with $p_1$ being the closest. 
    For each $p_j \in P_{\mathcal{L}}$ define a unique range $R_j$, which is a ring centered at $q$, covering all the distances between $\left(dist(q, p_{j-1}), dist(q, p_j)\right)$, exclusively (see Figure~\ref{fig:lem:two-layer-case}). In addition, define one more ring, $R_{\rho_\delta + 1}$ for $\left(dist(q, p_{\rho_\delta}), dist(q, \overline{p})\right)$.

    All these $\rho_\delta +1$ ranges are disjoint, and they do not contain any point in $\mathcal{L}$. Since $\mathcal{L}$ is an $\eps$-net for this range space, this means that all the ranges $R_j$ have at most $\eps \cdot n$ points from $\points$ (the lower level). As a result, the union of all these ranges contains at most $\eps \cdot n \cdot (\rho_\delta + 1)$ points.
    
    By considering the $P_\mathcal{L}$ itself, this proves the fact that there are at most $\eps \cdot n \cdot (\rho_\delta + 2)$ points in $\points$ that are closer to $q$ than $\overline{p}$.
\end{proof}

\begin{figure}[h]
    \centering
    \includegraphics[width=0.3\linewidth]{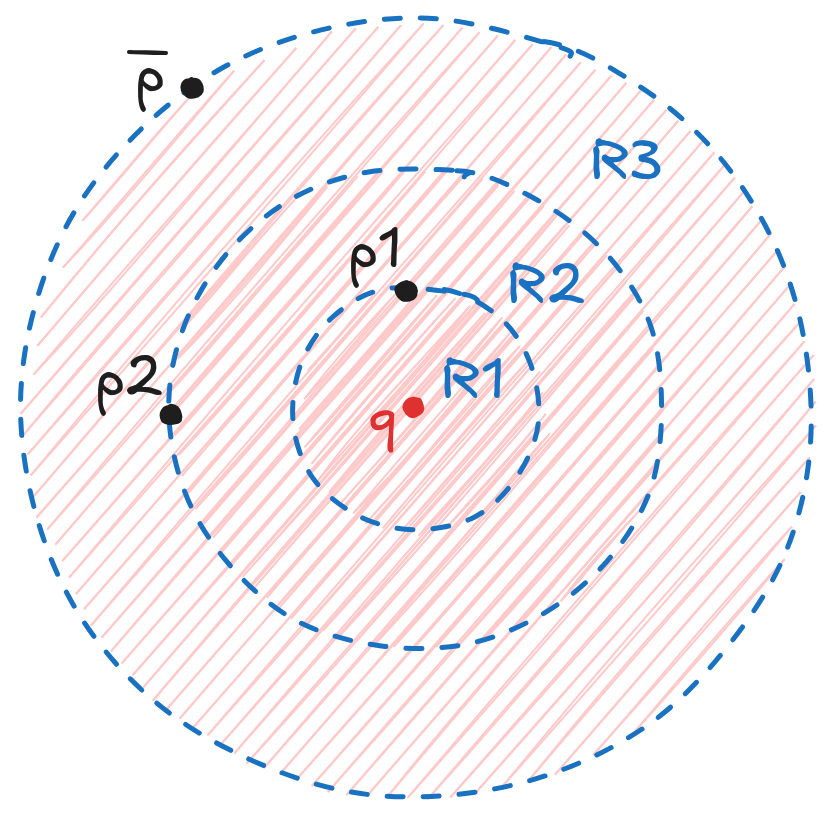}
    \caption{Visualization of Lemma~\ref{lem:two-layer-case}. In this example, $\rho_\delta = 2$ and the black points are inside the $\eps$-net.}
    \label{fig:lem:two-layer-case}
\end{figure}

\subsection{Proof of Theorem~\ref{thm:time}}
\begin{proof}
    We know that the bottom-most layer $\mathcal{L}_0$ is the point set itself, and the upper layer $\mathcal{L}_1$ is an $\eps$-net of $\mathcal{L}_0$ for $\eps = c_0 d\frac{\log n}{n} 2^m$ (see Equation~\ref{eq:eps}).

    The Query algorithm starts from the root and moves down to layer $\mathcal{L}_1$. Then, it finds a point $\overline{p}$ in $\mathcal{L}_1$ using the Greedy algorithm. 
    Finally, it applies the greedy algorithm in layer $\mathcal{L}_0$, starting from the point $\overline{p}$ and returns the result.
    % $\overline{p}$ in layer $\mathcal{L}_0$ with another greedy search in this layer. After that, it continues from $\overline{p}$ in layer $\mathcal{L}_0$ with another greedy search in this layer.

    Let $T(n)$ denote the running time of this algorithm for a point set of size $n$. The first step, starting from the root down to layer $\mathcal{L}_1$, is recursively equivalent to a search on a HENN graph, built on top of \underline{the points in $\mathcal{L}_1$}, with the same hyperparameters.
    As a result, using induction, the first step takes 
    \[T(|\mathcal{L}_1|) = T(\frac{n}{2^m})\]
    
    After finding $\overline{p}$ in layer $\mathcal{L}_1$, based on Lemma~\ref{lem:two-layer-case}, there are at most $\eps \cdot n \cdot (\rho_\delta + 2)$ points in $\points$ closer than $\overline{p}$ to $q$. We know that based on $\mathsf{GreedySearch}$, the second step, at each iteration, gets at least one step closer to $q$. Therefore, the total number of visited hops in the second step is at most $\eps \cdot n \cdot (\rho_\delta + 2)$. 
    
    Let $d^*$ denote the degree of each node in the navigation graph $\mathcal{G}_\points$. This results in the second step taking $O(\eps \cdot n \cdot \rho_\delta \cdot d^*)$ time to complete. Now, we can write the following recursion on the running time $T(n)$:

    \[
        T(n) \leq T(\frac{n}{2^m}) + \eps \cdot n \cdot (\rho_\delta + 2) \cdot d^* = T(\frac{n}{2^m}) + O(d \cdot \log n \cdot 2^m \cdot \rho_\delta \cdot d^*)
    \]

    The value $2^m$ is always kept as a constant (less than $64$ in HNSW). Hence, using the Master Theorem, the running time would be:
    \[
        T(n) = O(d \cdot d^* \cdot \rho_\delta \cdot \log^2 n)
    \]
\end{proof}
\section{Pseudo-codes}\label{sec:app:code}
\begin{algorithm}[h]
\caption{Greedy Search Algorithm}
\label{alg:query}
\begin{algorithmic}[1]
\Require The HENN graph $\mathcal{H}$ and the query point $q$.
\Ensure The approximate nearest neighbor of $q$ in $\points$.
\Function{Query}{$\mathcal{H}, q$}
    \State $v \gets \textit{random point from root} (\mathcal{L}_0)$
    \For{each layer $i = L, L-1, \cdots, 0$}
        \State $v \gets \textit{GreedySearch}(\mathcal{L}_i, v, q)$
    \EndFor
    \State {\bf Return} $v$
\EndFunction

\Function{GreedySearch}{$\mathcal{G}, v, q$}\Comment{$\mathcal{G}$ is the navigable graph, starting node $v$, and query $q$}
    \State $next \gets v$
    \Repeat
        \State $curr \gets next$
        \State $\mathcal{N} \gets $ neighbors of $curr$ in $\mathcal{G}$.
        \State $next \gets \arg\min_{u\in \mathcal{N}} dist(u, q)$\Comment{Closest neighbor to $q$}
    \Until{$dist(next, q) \geq dist(curr, q)$}\Comment{Until getting stuck in local minima}
    \State {\bf Return} $curr$
\EndFunction
\end{algorithmic}
\end{algorithm}
\section{More on Experiments}\label{sec:app:exp}
\subsection{Experimental Setup Configuration}\label{sec:app:exp:conf}
All experiments were conducted on an Ubuntu 24 server equipped with 64 CPU cores and 128 GB of RAM. Wherever applicable, we enabled hnswlib's parallel indexing capabilities for both the baseline HNSW and our proposed HENN implementations. To ensure consistent and interpretable timing results, all query operations were executed sequentially. The experiments on real-world benchmark datasets are mostly performed in Python\footnote{We used the Python binding of the C++ implementations, similar to hnswlib.}, while all other experiments were conducted in C++.

\subsection{Implementation Details}\label{sec:app:exp:impl}
The parameter $m$, which governs the exponential decay in layer size, is set to match the baseline configurations to ensure a fair comparison with an equal number of layers. To compute an $\eps$-net, we draw random samples with replacement and verify whether the sampled subset satisfies the $\eps$-net property. Otherwise, we repeat. 

A naive brute-force verification would require checking coverage over all possible ranges (e.g., all rings in the specified range space), which is computationally expensive. To make this process tractable in practice, we introduce a hyperparameter $r$, during the indexing phase, that denotes the number of ranges to check, also referred to as the \emph{range size}. During sampling, we randomly select $r$ ranges from the space and ensure that the sampled points hit all of them. Additional implementation details for this phase are provided in the supplemental material (also see background on building \enet in Section~\ref{sec:app:build-epsnet} in the Appendix).

\subsection{Methods and Datasets}\label{sec:app:exp:method}
In this section, we compare the HENN structure against the HNSW baseline. As discussed earlier, the primary distinction between the two lies in how hierarchical layers are constructed. HNSW selects layer points randomly, while HENN ensures that each layer forms an $\eps$-net of the previous one. In both methods, a navigable graph is built over each layer using an incremental greedy algorithm: when a new point is added, it is connected to the approximate nearest neighbors among the existing points in that layer, forming an NSW (navigable small world) graph. Other navigable graph constructions and baselines are discussed in the supplemental material.

To evaluate performance, we use both synthetic and real-world datasets. The synthetic data is generated by varying the number of dimensions and data points, with each coordinate sampled independently from an exponential distribution parameterized by $\lambda$. Larger $\lambda$ values produce more skewed distributions, making the dataset increasingly challenging and suitable for testing worst-case behavior. For real-world benchmarks, we use the SIFT~\cite{jegou2011product}, GloVe~\cite{pennington2014glove}, and MNIST~\cite{lecun1998gradient} datasets with Euclidean $\ell_2$-norm. The SIFT dataset consists of 1 million 128-dimensional visual descriptors extracted from image features, accompanied by 10,000 query vectors and their ground-truth nearest neighbors. The GloVe dataset contains 50,000 pre-trained 100-dimensional word embeddings derived from global word co-occurrence statistics across large text corpora. MNIST also contains 70,000 images (images) flattened into 784-dimensional vectors.. These are popular ANN benchmarks used in the literature\footnote{\href{https://ann-benchmarks.com}{ann-benchmarks.com}}. Additional experiments on more synthetic and real datasets are included in the supplemental material.

\subsection{More Experiments}
In this section, we present experiments on HENN integrated with various navigation graphs, as described in Appendix~\ref{sec:app:nav-graph}. Additionally, we explore the applicability of HENN under a commonly used distance metric in approximate nearest neighbor (ANN) problems: cosine similarity (also referred to as the angular distance metric). We evaluate the performance of HENN on two real-world datasets: FashionMNIST and GloVe, with the angular metric.

\paragraph{Navigation Graphs.}
As detailed in Appendix~\ref{sec:app:nav-graph}, different navigation graph constructions can be integrated into each individual layer of the HENN structure. In this section, we empirically evaluate the performance of the following navigation graph variants:

\begin{itemize}
    \item \textbf{K-Nearest Neighbor Graph ($\mathsf{knn}$):} Each point is connected to its exact $k$ nearest neighbors based on the original distance metric.
    
    \item \textbf{Navigable Small World Graph ($\mathsf{NSW}$):} This graph structure, used in HNSW, provides a navigable approximation of both the KNN and Delaunay Triangulation (DT) graphs.
    
    \item \textbf{Dimensionality Reduction-Based Graph ($\mathsf{dim\_red}$):} A dimensionality reduction technique, such as t-SNE~\cite{van2008visualizing}, PCA, the Johnson-Lindenstrauss (JL) projection~\cite{johnson1984extensions}, or UMAP~\cite{healy2024uniform}, is first applied to the data points, reducing them to a low-dimensional space (typically 2D or 3D). The exact DT graph is then constructed on the reduced representation.
\end{itemize}

\paragraph{Recall Bounds ($\rho_\delta$).} Figure~\ref{fig:recall-bound} presents a comparison of recall bounds across different navigation graph constructions. This metric serves as a key component in the time complexity guarantees established in Theorem~\ref{thm:time}. We conduct experiments across varying dataset sizes ($n$) and distributions. As expected, recall bounds increase with more skewed data distributions, e.g., higher values of $\lambda$ in exponential distributions, indicating a reduced probability of hitting a fixed number of nearest neighbors ($k$).

Notably, for both the KNN and NSW navigation graphs, the recall bound remains below 15 and exhibits minimal sensitivity to changes in dataset size. Among the dimensionality reduction-based graphs, t-SNE consistently achieves better recall than PCA, JL, and UMAP, suggesting its effectiveness in preserving neighborhood structure in low dimensions. All dimensionality reduction methods were applied with a target dimension of 3. These results highlight t-SNE's ability to solve approximate nearest neighbor search.

While KNN achieves the best recall performance, it is computationally more expensive to construct compared to other methods. NSW, despite being an approximation of KNN, shows slightly worse recall, trading off quality for efficiency.

\begin{figure}[htbp]
  \centering
  \begin{minipage}[b]{0.49\textwidth}
    \includegraphics[width=\textwidth]{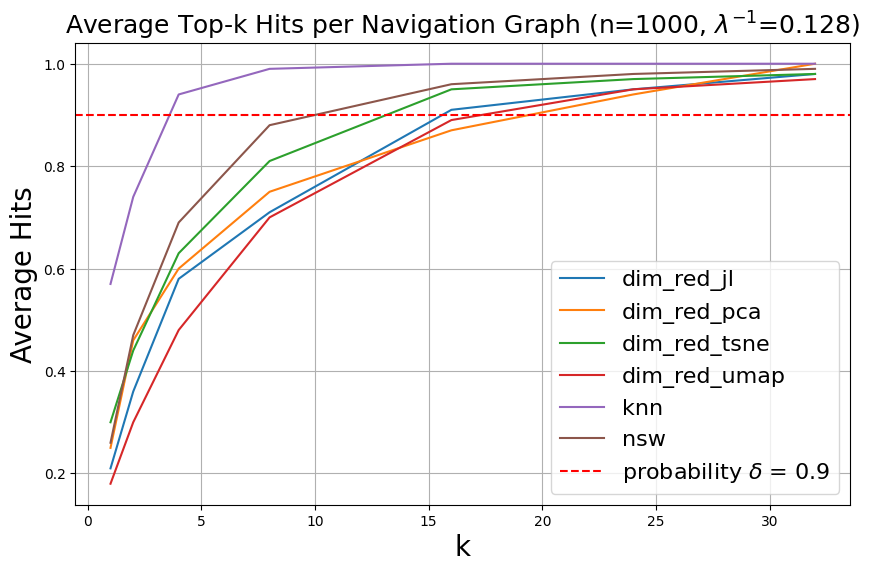}
    % \caption*{$\lambda = 4$}
  \end{minipage}
  \hfill
  \begin{minipage}[b]{0.49\textwidth}
    \includegraphics[width=\textwidth]{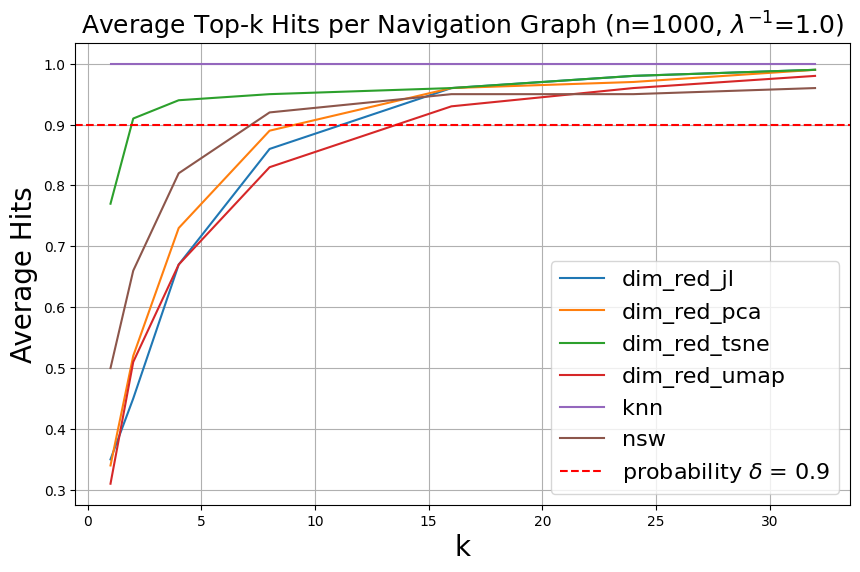}
    % \caption*{$\lambda = 32$}
  \end{minipage}
  \hfill
  \begin{minipage}[b]{0.49\textwidth}
    \includegraphics[width=\textwidth]{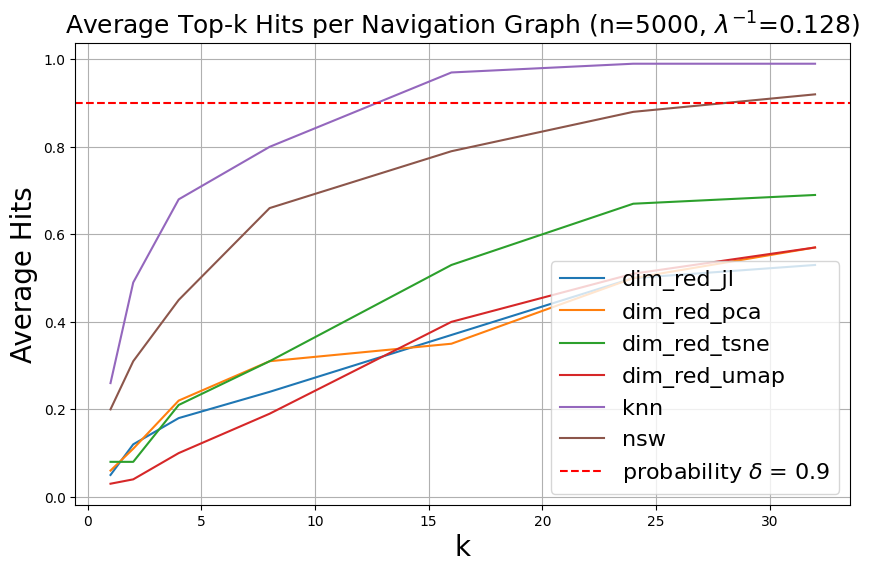}
    % \caption*{$\lambda = 32$}
  \end{minipage}
  \hfill
  \begin{minipage}[b]{0.49\textwidth}
    \includegraphics[width=\textwidth]{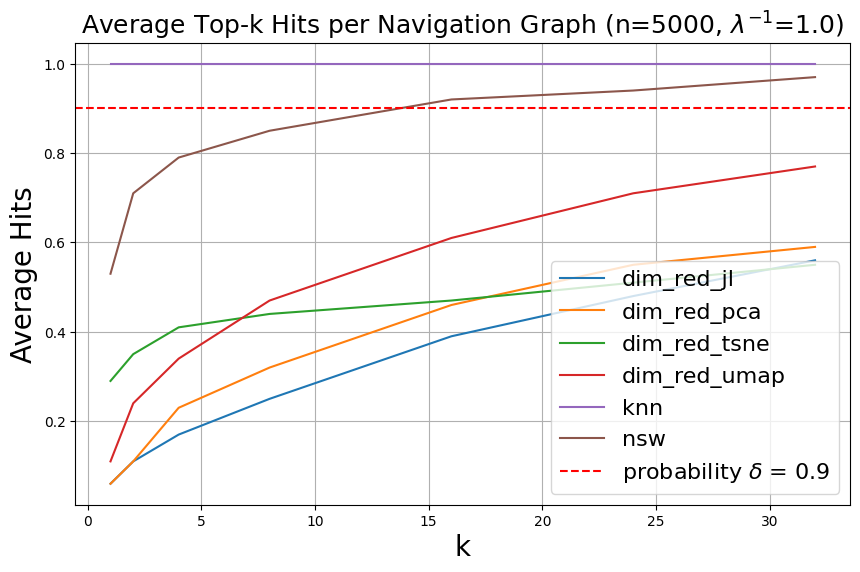}
    % \caption*{$\lambda = 256$}
  \end{minipage}
  \caption{Comparing Recall Bound $\rho_\delta$ for different navigation graphs. \emph{Average Hits} is the fraction of times that at least one point is retrieved within the $k$ nearest points. The red dotted lines show the threshold for $\delta = 0.9$. The intersection of the red dotted line with the plots shows the value of $\rho_{0.9}$ for each navigation graph. The data comes from an exponential distribution with parameter $\lambda$. $\lambda = 1$ is equivalent to uniform distribution and $\frac{1}{\lambda}$ is the mean of this distribution. The dimension of the points here is 32. The values are averaged over 100 runs.}
  \label{fig:recall-bound}
\end{figure}

\paragraph{Angular Distance Metric.} One widely used distance metric in approximate nearest neighbor (ANN) search is the angular (cosine) similarity metric. HENN is compatible with this metric in addition to standard $\ell_p$ norms. For cosine similarity, all vectors are normalized to lie on the unit sphere, and the objective is to find the vector that maximizes the cosine of the angle with a given query vector. This metric is particularly common in applications involving word embeddings, such as GloVe.

To ensure that the theoretical guarantees of HENN extend to the cosine similarity setting, we must show that the associated range space has bounded VC-dimension, similar to the case of $\ell_p$ norms. As discussed in Section~\ref{sec:henn}, ring ranges in HENN are defined as the difference between two balls. Under cosine similarity, the analogous range corresponds to the intersection of the unit sphere with two parallel hyperplanes, representing all points whose cosine similarity with a query point $q$ lies between two thresholds $l$ and $u$. This geometric region is equivalent to a spherical stripe (or band), and it is known that the range space defined by stripes of parallel hyperplanes has VC-dimension $\Theta(d)$. Therefore, the range space under cosine similarity also has a bounded VC-dimension, and the theoretical guarantees of HENN remain valid in this setting.

\paragraph{HENN with Navigation Graphs.} Figure~\ref{fig:henn-nav-graphs} presents the performance of HENN when integrated with three types of navigation graphs: exact KNN, NSW (a variant similar to HNSW), and dimensionality reduction via t-SNE. Among these, the KNN-based integration consistently achieves the highest Recall@$k$ across various values of $k$. HENN+NSW resembles the HNSW implementation, but in this case, we are building $\eps$-nets for each layer instead of any arbitrary random samples.

Interestingly, for real-world datasets with inherent cluster structure, such as FashionMNIST, dimensionality reduction using t-SNE outperforms NSW, likely due to its ability to preserve local neighborhoods. In contrast, under adversarial settings such as data drawn from an exponential distribution, NSW performs more robustly. These results also demonstrate that HENN is compatible with angular distance metrics, as both GloVe and FashionMNIST are evaluated under cosine similarity in this experiment.

\begin{figure}[htbp]
  \centering
  \begin{minipage}[b]{0.49\textwidth}
    \includegraphics[width=\textwidth]{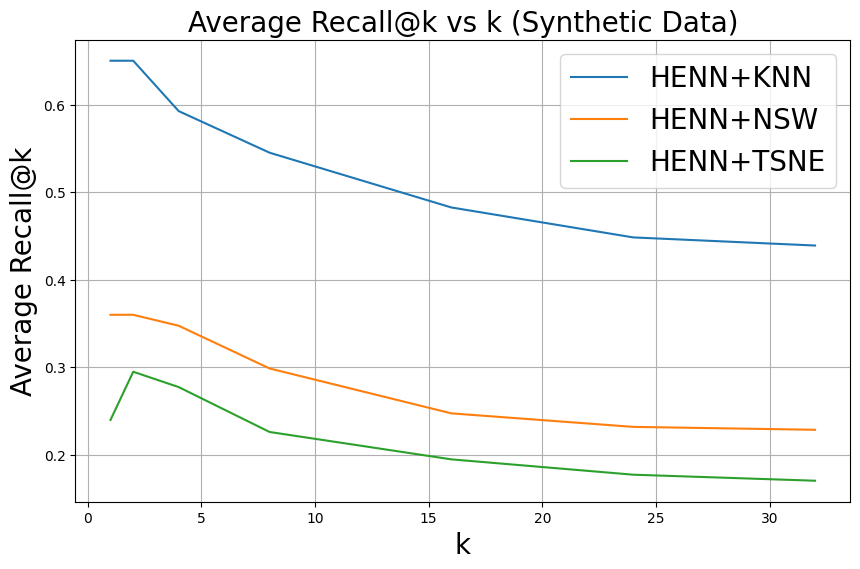}
    \caption*{Exponential ($\lambda^{-1} = 0.128$)}
  \end{minipage}
  \hfill
  \begin{minipage}[b]{0.49\textwidth}
    \includegraphics[width=\textwidth]{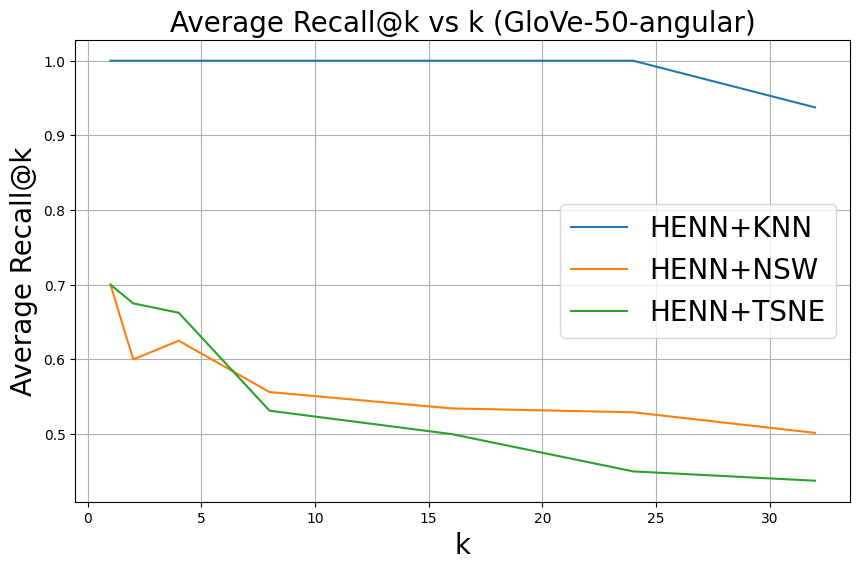}
    \caption*{GloVe ($d = 50$)}
  \end{minipage}
  
  % \hfill
  \begin{minipage}[b]{0.49\textwidth}
    \includegraphics[width=\textwidth]{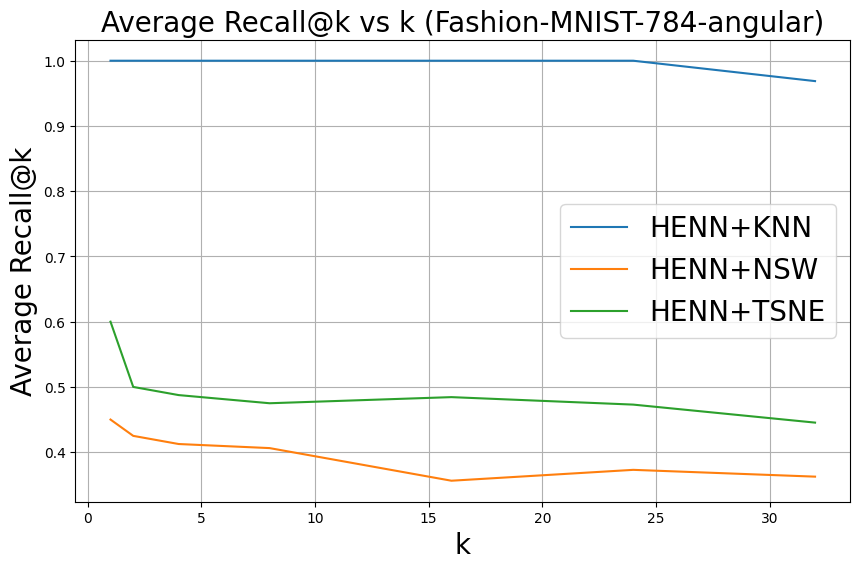}
    \caption*{Fashion-MNIST ($d = 784$)}
  \end{minipage}
  \caption{Comparing the Recall@k for different values of k on HENN structure integrated with different navigation graphs. The size of datasets, $n$, is equal to 10000 in these experiments. The distance metric on the synthetic data is the \underline{Euclidean $\ell_2$-norm}, while for both GloVe and Fashion-MNIST, it is \underline{angular cosine similarity}.}
  \label{fig:henn-nav-graphs}
\end{figure}

\end{document}